\documentclass[journal]{IEEEtran}
\usepackage{graphicx}
\usepackage{color}
\usepackage{indentfirst}

\makeatletter
\let\theoremstyle\relax

\makeatother

\usepackage[none]{hyphenat}

\usepackage{amsmath}
\usepackage{amssymb}
\usepackage{amsfonts}
\usepackage{amsthm}

\usepackage{stfloats}
\usepackage{lipsum}

\usepackage{cite}
\usepackage{multirow}
\usepackage{longtable}
\usepackage{rotating}
\usepackage{caption2}

\usepackage{makecell}

\usepackage{xcolor}
\usepackage{bigdelim}

\usepackage{setspace}
\theoremstyle{definition}
\newtheorem{theorem}{Theorem}
\newtheorem{proposition}{Proposition}

\newtheorem{corollary}{Corollary}
\newtheorem{definition}{Definition}

\newtheorem{example}{Example}
\newtheorem*{StateEstimation}{State Estimation Problem}
\newtheorem*{StateEstimationAttack}{State Estimation Problem under Attacks}

\usepackage{diagbox}
\usepackage{lineno}

\usepackage{algorithm}
\usepackage{algorithmic}

\begin{document}

\title{Error- and Tamper-Tolerant State Estimation for Discrete Event Systems {under Cost Constraints}}

\author{Yuting~Li,~Christoforos~N.~Hadjicostis,~\IEEEmembership{Fellow,~IEEE},~Naiqi~Wu,~\IEEEmembership{Fellow,~IEEE,}~and~Zhiwu~Li,~\IEEEmembership{Fellow,~IEEE}
\thanks{This work was supported in part by the National Key R\&D Program
	of China under Grant 2018YFB1700104, the National Natural
	Science Foundation of China under Grant 61873342, and the
	Science Technology Development Fund, MSAR, under Grant
	0012/2019/A1.}
\thanks{Y. Li and N. Wu are with the Institute of Systems Engineering, Macau University of Science and Technology, Taipa 999078, Macau SAR China (e-mail: yuutinglee@163.com; nqwu@must.edu.mo).}
\thanks{C. N. Hadjicostis is with the Department of Electrical and Computer
	Engineering, University of Cyprus, Nicosia 1678, Cyprus (e-mail: chadjic@ucy.ac.cy).}
\thanks{Z. Li is with the Institute of Systems Engineering, Macau University of Science and Technology, Taipa 999078, Macau SAR China, and  also with the School of Electro-Mechanical Engineering, Xidian University, Xi'an 710071, China (e-mail: zhwli@xidian.edu.cn).}}


\maketitle

\begin{abstract}
This paper deals with the state estimation problem {in} discrete-event systems modeled with nondeterministic finite automata, {partially observed via} a sensor measuring unit whose {measurements (reported observations)} may be {vitiated} by a malicious attacker.
{The} attacks {considered in this paper include} arbitrary deletions, insertions, or substitutions of {observed symbols by taking into account a bounded number of attacks} {or, more generally, a total cost constraint (assuming that each deletion, insertion, or substitution bears a positive cost to the attacker)}. 
An efficient approach is proposed to describe possible  sequences of observations that match the {one {received} by} the measuring unit, as well as their corresponding state estimates and associated total costs.
We develop an algorithm to {obtain} the least-cost {matching} sequences by reconstructing only a finite number of possible sequences, {which we subsequently use to efficiently} {perform} state estimation.
We also develop a {technique} for {verifying} {tamper-tolerant} diagnosability under {attacks} that involve a bounded number of deletions, insertions, and substitutions (or, more generally, {{under} attacks} of bounded total cost) by using a {novel structure} {{obtained} by attaching attacks and costs to the original {plant}}. {The overall {construction and verification procedure have} complexity that is of $O(|X|^2 C^2)$, where $|X|$ is the number of states of the given finite automaton and $C$ is the maximum total {cost} that is allowed for all the deletions, insertions, and substitutions.}
{We determine the minimum value of $C$ such that the attacker can coordinate its tampering action to keep the observer indefinitely confused while utilizing a finite number of attacks. Several examples are presented to demonstrate the proposed methods.}
\end{abstract}

\begin{IEEEkeywords}
Discrete-event system;
State estimation; Fault diagnosis; Information corruption; Data tampering; Least-cost error sequence.  
\end{IEEEkeywords}

\IEEEpeerreviewmaketitle

\section{Introduction}
\IEEEPARstart{S}{tate} estimation {in} continuous-time systems {was} initiated in the 1950s {and} has been {extensively} applied to {a variety of} areas of engineering and science 
\cite{simon2006optimal}. The {primary} motivation for state estimation is {to be able to perform} analysis of the current state of a system under the conditions characterized by a streaming sequence of measurements.
{The state estimator has knowledge of both} a model of the system and the way it generates observations (outputs). Under appropriate redundancy levels, it can eliminate the effects of bad or erroneous measurements ({in some cases,} even account for temporary loss of measurements) without significantly affecting the quality of {estimated} values \cite{monticelli2012state}.

The development of information and computer technology {has spurred} the booming of computer-integrated systems whose {structure and} evolution are regulated by engineers; {examples include} manufacturing systems, intelligent traffic systems, and communication networks. Discrete-event systems (DESs) are a technical abstraction of these systems with discrete state spaces and event-triggered dynamics \cite{zeigler2000theory}.
The state estimation problem {in} {DESs} is essential {since typically} state information cannot be {directly} obtained due to limited sensor availability {in many applications of DESs}. For example, state estimation is {critical} for supervisory control \cite{ramadge1987supervisory,hadjicostis2020estimation}, fault diagnosis \cite{sampath1995diagnosability,debouk2000coordinated,debouk2003effect}, and opacity {verification and enforcement} \cite{bryans2008opacity,saboori2007notions,jacob2016overview,saboori2013verification,saboori2008opacity}.
The problem becomes challenging because of {possible} faulty observations (e.g., due to cyber attacks, malfunctioning sensors, erroneous communication transmissions, or synchronization issues during the transmission of information from different sensors) \cite{wakaiki2017supervisory,hu2018state,athanasopoulou2010maximum,debouk2003effect,carvalho2011generalized,carvalho2012robust,carvalho2013robust}.

In cyber-physical systems, it is common to encounter situations, where serious risks of cyber attacks {occur} between cyber and physical components.
Cyber attacks can lead to enormous {financial} {loss} and disorder of {important socio-economical infrastructures} \cite{mousavinejad2018novel, ding2018survey,li2013robust,zhao2015power}. {Examples of cyber attacks include the} StuxNet {strike on}  industrial control systems \cite{farwell2011stuxnet}, {{the} hacking of the} Maroochy Shire Council's sewage control system (resulting in the release of one
million liters of untreated sewage) \cite{slay2007lessons}, {and {the} spoofing of global positioning systems} {to capture} unmanned aircrafts \cite{kerns2014unmanned}. 
{This paper addresses} centralized state estimation and fault diagnosis {in DESs}  under adversarial {attacks}  {that corrupt the sensor readings}. Some related work has appeared in the context of sensor attacks that drive  a controlled DES to unsafe or undesirable states by manipulating observation sequences \cite{goes2017stealthy,meira2019synthesis,su2017cyber,su2018supervisor}.

The work in this paper is also related to some existing state estimation and security {results} in the area of DESs \cite{athanasopoulou2010maximum, wakaiki2017supervisory,thorsley2008diagnosability,athanasopoulou2006probabilistic,boel2002decentralized,khanna1973sampling,lin2014control,lin2019state}.  In particular, {the study in \cite{athanasopoulou2010maximum}} considers fault diagnosis under unreliable observations: transpositions, deletions, and insertions of output symbols are formally defined with probabilities {captured by a probabilistic finite automaton}.
Drawing upon a probabilistic methodology, the work in  \cite{athanasopoulou2010maximum} determines  {{whether} the} fault-free {or} faulty {system} has {most likely} generated the sequence {received} at a diagnoser. In \cite{wakaiki2017supervisory}, a supervisor of a plant {{under partial observations}} is constructed to {overcome} attacks, {where  attacks are modeled} by a set-valued map that represents all possibly corrupted strings with respect to each original string.

{The authors of \cite{boel2002decentralized} consider decentralized fault diagnosis, where communication between two diagnosers is {expensive}. The costs
	on the communication channels are described in terms of {the number of data packets}.
	One diagnoser aids {the other} in achieving failure detection and diagnosis by sending information {about} its estimated states. 
	In order to {perform} decentralized fault diagnosis and minimize the costs of communication and computation, a protocol is implemented to decide {what} {kind} of information is useful to communicate between the diagnosers. The work in \cite{rudie2003minimal} addresses the problem of decentralized state estimation with costly communication between two agents (or local sites). In order to minimize communication {costs},  a communication strategy, {i.e.,} a set of {functions}, is developed to determine whether {a state estimated by one} agent should be communicated to {the other}. 
}

With the development of networked control systems, {data exchanged} among networked components may suffer {communication errors or malicious} {attacks}  \cite{hu2018state,ding2018survey,mousavinejad2018novel}. In the framework of DESs,  three typical types of cyber attacks {are considered, namely} deletions, insertions, and substitutions. 
{A} deletion (substitution) attack is {a} natural strategy, in which a valid data transmission is maliciously deleted (substituted) such that {a} system {may} deviate from {its} expected {behavior} {(if this symbol is a control input) or an outside observer may incorrectly estimate its activity (if this symbol is an output of the system)} \cite{wakaiki2017supervisory}.
{An} insertion attack {is an attack that inserts extraneous symbols and can have similar effects as above. An insertion attack can also be used to render} {certain} resources of a system unavailable, e.g., an attacker sends a huge number of {fabricated} packets to a device{,} with the intention of dramatically consuming 
amounts of endpoint network bandwidth \cite{householder2001managing}. 

%
%


{In order to overcome such corruptions, the strategy proposed in this paper ensures that} the sequence estimation unit calculates a set of matching sequences {based on the possibly tampered sequence} received from the channel.
We aim to choose {among} all matching sequences the ones that closely match the one {received} at the sequence estimation unit.
{Therefore, a} cost-value {notion} is proposed, where a positive value is assigned to each type of attack; {this value} is inversely related to {{the} {likelihoods} of {different types of} attack occurrences}. 
Matching sequences with less costs are much more {likely to have occurred and can be used to make educated estimates of states ({or} faults) that may have occurred in the system}.
{Note that} the number of matching sequences and their lengths can be infinite due to the existence of deletions.
Hence, in this paper, an upper bound {on the} total cost is set to limit the number of sequences that match an observed sequence. A special case of this setting is to assume that the total number of attacks is bounded (this case arises when each attack has a bounded cost).

The main contributions of this paper are as follows:

1) It formulates and solves the state estimation problem under {communication} attacks of bounded total cost, where each type of attack is associated with an individual positive cost.

2) It proposes an efficient state estimation algorithm by representing all matching sequences {as} {the} language of an observation automaton that synchronizes with the plant.

3) A {novel structure} is {proposed} to check {the} {tamper-tolerant} diagnosability of {the} plant by attaching attacks and costs to an enhanced version on the plant model.


\section{Background and Preliminaries}\label{Notion}
Let $\Sigma$ be an alphabet with a set of distinct symbols (events) $\alpha, \beta, ...$. {As usual,} $\Sigma^{\ast}$ denotes the set of all finite symbol sequences over $\Sigma$, including the empty sequence $\varepsilon$ (sequence with no symbols).
A member of $\Sigma^{\ast}$ is {said to be} a string or trace, and a subset of $\Sigma^{\ast}$ is a language {defined over $\Sigma$}.
The length of a string {$s\in\Sigma^{\ast}$} is {the number of symbols in $s$,} denoted by $|s|$ with $|\varepsilon|=0$.
Given strings $s$, $t\in \Sigma^{\ast}$, the concatenation of strings $s$ and $t$ is defined as the string $st$.
For a string $s\in \Sigma^{\ast}$, $t \in \Sigma^{\ast}$ is said to be a \emph{prefix} of $s$, if $(\exists t^{\prime}\in \Sigma^{\ast})$ $s=tt^{\prime}$.
Given a language $\mathcal{L}\subseteq \Sigma^{\ast}$,
$\bar{\mathcal{L}}$ denotes the \emph{prefix-closure} of $\mathcal{L}$, defined as
$\bar{\mathcal{L}}=\{t\in \Sigma^{\ast}|\exists t^{\prime}\in \Sigma^{\ast},tt^{\prime}\in \mathcal{L}\}$.
{By a slight abuse of notation}, for $\sigma\in \Sigma$ and $s\in\Sigma^{*}$, we {write} $\sigma \in s$ to represent {that} {the} event  $\sigma$ {is} in $s${, i.e., $s=s'\sigma s''$ for some $s',s''\in \Sigma^{\ast}$.} For $\Sigma'\subseteq \Sigma$, we write $\Sigma' \in s$ to denote ($\exists \sigma\in \Sigma'$) $\sigma\in s$; otherwise $\Sigma' \notin s$. We {use} $\mathcal{L}/s$ {to denote} the \emph{postlanguage} of $\mathcal{L}$ after $s$, i.e., $\mathcal{L}/s = \{t\in \Sigma^\ast| st\in \mathcal{L}\}$.
\begin{definition}
	A deterministic finite automaton (DFA), denoted by $G$, is a {four}-tuple $G =(X, \Sigma, \delta, x_0)$, where $X$ is the set of states, $\Sigma$ is the set of events, $\delta:X\times \Sigma \rightarrow X$ is the {partial} state transition function, and $x_0 \in X$ is the initial state.
	
	For convenience, $\delta$ can be extended from domain $X\times \Sigma$ to
	$X\times \Sigma^{\ast}$ in the following recursive manner:
	$\delta(x,\varepsilon)= x$; 
	$\delta(x, \sigma s)=\delta(\delta(x, \sigma),s)$ for $x\in X$, $\sigma \in \Sigma$, and $s\in \Sigma^{\ast}$ if $\delta(x,\sigma)$ is defined. Note that if $\delta(x,\sigma)$ is not defined, then $\delta(x, \sigma s)$ is not defined.
	The generated {language} of $G$ is given by {$\mathcal{L}(G)=\{s\in \Sigma^{\ast}|\delta(x_0, s)!\}$}, where $!$ means ``is defined''.
\end{definition}

\begin{definition}		
	A nondeterministic finite automaton (NFA), denoted by $G_{nd}$, is a four-tuple $G_{nd}=(X, \Sigma, \delta, X_0)$, where $X$ and $\Sigma$ have the same interpretation as {in a} DFA, $\delta:X\times \Sigma \rightarrow 2^{X}$ is the {(nondeterministic)} state transition {function}, and  {$X_{0}\subseteq X$ is a set of initial states}.
	
	{By} letting $B\subseteq X$ and $\sigma \in \Sigma$, $\delta(B, \sigma)$ is defined as $\cup_{x\in B}\delta(x, \sigma)$.
	In order to characterize the strings generated by {an} NFA, the domain $X\times \Sigma$ {of the transition {function} can} be extended to $X\times \Sigma^{\ast}$. For $x\in X$, $s\in \Sigma^{\ast}$, and $\sigma\in \Sigma$, $\delta$ is defined recursively as:  {$\delta(x,\varepsilon)= \{x\}$; $\delta(x, \sigma s)=\delta(\delta(x,\sigma),s):=\cup_{x'\in \delta(x,\sigma)}\delta(x',s)$}. An event $\sigma \in \Sigma$	is said to be \emph{feasible} at state $x\in X$ if  $\delta(x, \sigma)$ is {non-empty}. The {language} generated by $G_{nd}$ {is} defined as $\mathcal{L}(G_{nd})=\{s\in \Sigma^{\ast}|\exists x \in X_0,\delta(x, s)\neq \emptyset\}$, where $\emptyset$ denotes the empty set. {The} language $\mathcal{L}(G_{nd})$ is said to be \emph{live} if whenever $s \in \mathcal{L}(G_{nd})$, there exists an event $e\in \Sigma$ such that $se \in \mathcal{L}(G_{nd})$ \cite{sampath1995diagnosability}.
\end{definition}

{The set of events} $\Sigma$ {in a DFA or NFA} is partitioned into the subset of observable events, $\Sigma_{o}$, and the subset of unobservable
events, {$\Sigma_{uo}$ with} $\Sigma_{uo}=\Sigma \setminus \Sigma_{o}$.
{The sensor measuring unit can only observe and record observable events}.
{The natural projection $P:\Sigma^{\ast}\rightarrow \Sigma_{o}^{\ast}$ captures the sequence of observable {actions} in response to a sequence of events $s\in \mathcal{L}(G_{nd})$;} it is defined recursively as
\[
P(\sigma)=\left\{
\begin{aligned}
\sigma~ &~ \text{if} ~\sigma \in \Sigma_{o}, \\
\varepsilon~ &~ \text{if} ~ \sigma \in \Sigma_{uo}\cup \{\varepsilon\},
\end{aligned}
\right. 
\]
and $P( s\sigma)=P(s)P(\sigma),\text{for}~ \sigma \in \Sigma, s \in \Sigma^{\ast}$.
{The} natural projection $P$ can be used to map any trace $s\in \Sigma^{\ast}$ {to the corresponding sequence of observations $P(s)$ observed} at the sensor measuring unit. The inverse projection of $P$, $P^{-1}:\Sigma_{o}^{\ast}\rightarrow\Sigma^{\ast}$, is defined as follows: for all $\omega \in \Sigma_{o}^{\ast}$
\[P^{-1}(\omega)=\{s\in \Sigma^*|P(s)=\omega\}.\]
{A typical task by an} observer/agent {is} to determine a set of possible states {in}  which a system  may be. {The state} estimation problem {in} DESs is defined as follows.

\begin{StateEstimation}
	Given a {DES} described by NFA $G_{nd}$ {with a} sensor measuring unit, an observer/agent {needs to determine} a set of possible states {based on} an observation sequence {$P(s)\in \Sigma^{\ast}_{o}$} (generated by an underlying sequence of events $s,s\in \mathcal{L}(G_{nd})$, in the given NFA) {that is received} from the sensor measuring unit.
	The set of all possible states corresponding to an observable sequence $\omega=P(s) \in \Sigma_{o}^{\ast}$ starting from {the} states in {a} set $B$ {with} $B\subseteq X$ {is defined as}
	$R(B, \omega)=\{x^{\prime}\in X|(\exists s\in \Sigma^{\ast}) (\exists x \in B) \{P(s)=\omega\wedge x^{\prime}\in\delta(x,s)\}\}$.
\end{StateEstimation}

\begin{definition}\label{defObs}
	
	An observer is captured by {$Obs(G_{nd})=AC(2^X, \Sigma_{o}, \delta_{obs}, R(X_0, \varepsilon)):=(X_{obs}, \Sigma_{o}, \delta_{obs}, x_{0,obs})$}, where $2^{X} $ is the set of distinct subsets of $X$ (i.e., {the powerset of the set of states} of the given NFA $G_{nd}=(X, \Sigma, \delta, X_0)$), $\Sigma_{o}$ is the set of observable events, $x_{0,obs} \in 2^X$ is the set of initial states {given by $x_{0,obs}=R(X_0, \varepsilon)$}, and $\delta_{obs}: 2^X \times \Sigma_{o} \rightarrow 2^X$ is the state transition function {defined for $B\in 2^X$ and $\sigma_{o}\in \Sigma_{o}$ as $\delta_{obs}(B,\sigma_{o})=R(B,\sigma_{o})$. $AC(\cdot)$ denotes the accessible part of the observer starting from $x_{0,obs}$}.	
	
	For the construction of $\delta_{obs}$ over the domain $X_{obs} \times \Sigma_{o}^{\ast}$, one can proceed recursively as follows.
	First, for $x_{obs} \in X_{obs}$, we set $\delta_{obs}(x_{obs}, \varepsilon) = R(x_{obs}, \varepsilon)$.
	Second, for $\omega\in \Sigma_{o}^{\ast}$, $\sigma_{o} \in \Sigma_{o}$, we set $\delta_{obs}(x_{obs}, \sigma_{o} \omega) = \delta_{obs}(\delta_{obs}(x_{obs}, \sigma_{o}), \omega)= \cup_{x'\in \delta_{obs}(x_{obs}, \sigma_{o})} \delta_{obs}(\{x'\}, \omega)$.
\end{definition}


\begin{example}\label{observer of NFA}
	{Consider the} NFA $G_{nd}$ shown in Fig.~\ref{NFA}, where $X=\{0,1,2,3,4\}$, $\Sigma=\{\alpha, \beta,\gamma, \zeta\}$, $\Sigma_{o}=\{\alpha, \beta,\gamma\}$, $\Sigma_{uo}=\{ \zeta\}$, {$\delta$ is as defined in the figure}, and $X_{0}=\{0,1,2,3,4\}$.
	Note that, we have $\delta(\{2\},\alpha\beta\alpha)=\emptyset$ and $\delta(\{2\}, \beta\alpha\alpha)=\delta(\delta(\delta(\{2\}, \beta), \alpha), \alpha)=\{3,4\}$.
	
	Initially, the set of possible states is $x_{0,obs}=X_0$.
	{For} $s=\alpha\beta\alpha$, we can infer the following sets of state estimation:  \[\{0,1,2,3,4\}\xrightarrow{\alpha}\{2,3,4\}\xrightarrow{\beta}\{2,3\}\xrightarrow{\alpha}\{3,4\}.\]
	Note that this is also reflected in the observer in Fig.~\ref{ObserverNFA}. We start in state $x_{0,obs}$ (marked by an arrow); {if} $\alpha$ is observed, we {reach} state \{2,3,4\}; {if $\beta$ is subsequently observed, we {reach} $\{2,3\}$ from $\{2,3,4\}$; and so forth.}
	
	%
	%
	%
	
	\begin{figure}[htb]
		\begin{minipage}[l]{1\linewidth}
			\centering
			\includegraphics[scale=0.35]{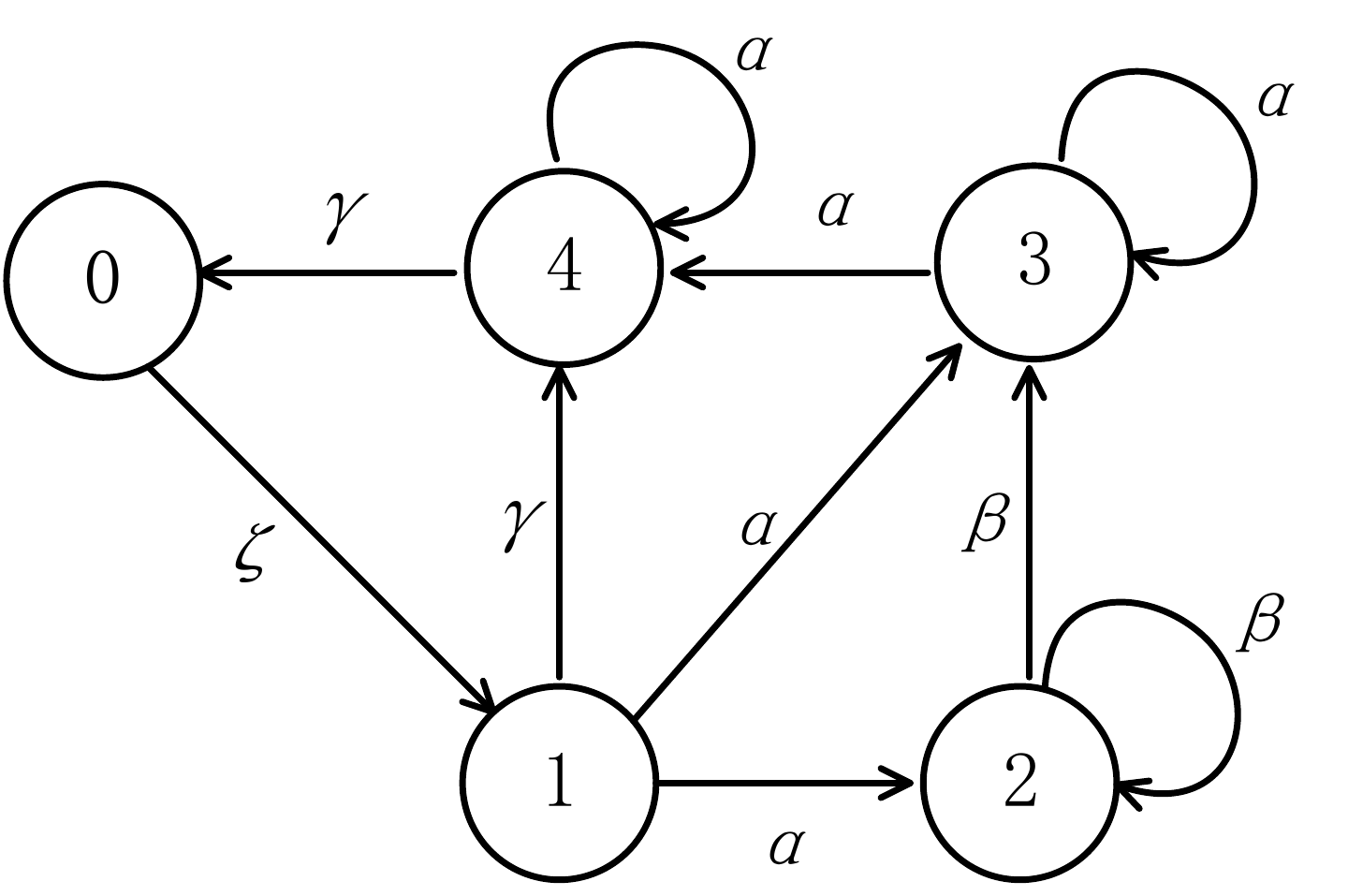}
			\caption{Nondeterministic finite automaton.}
			\label{NFA}
		\end{minipage}
	\end{figure}
	
	\begin{figure}[htb]	
		\begin{minipage}[l]{1\linewidth}
			\centering
			\includegraphics[scale=0.35]{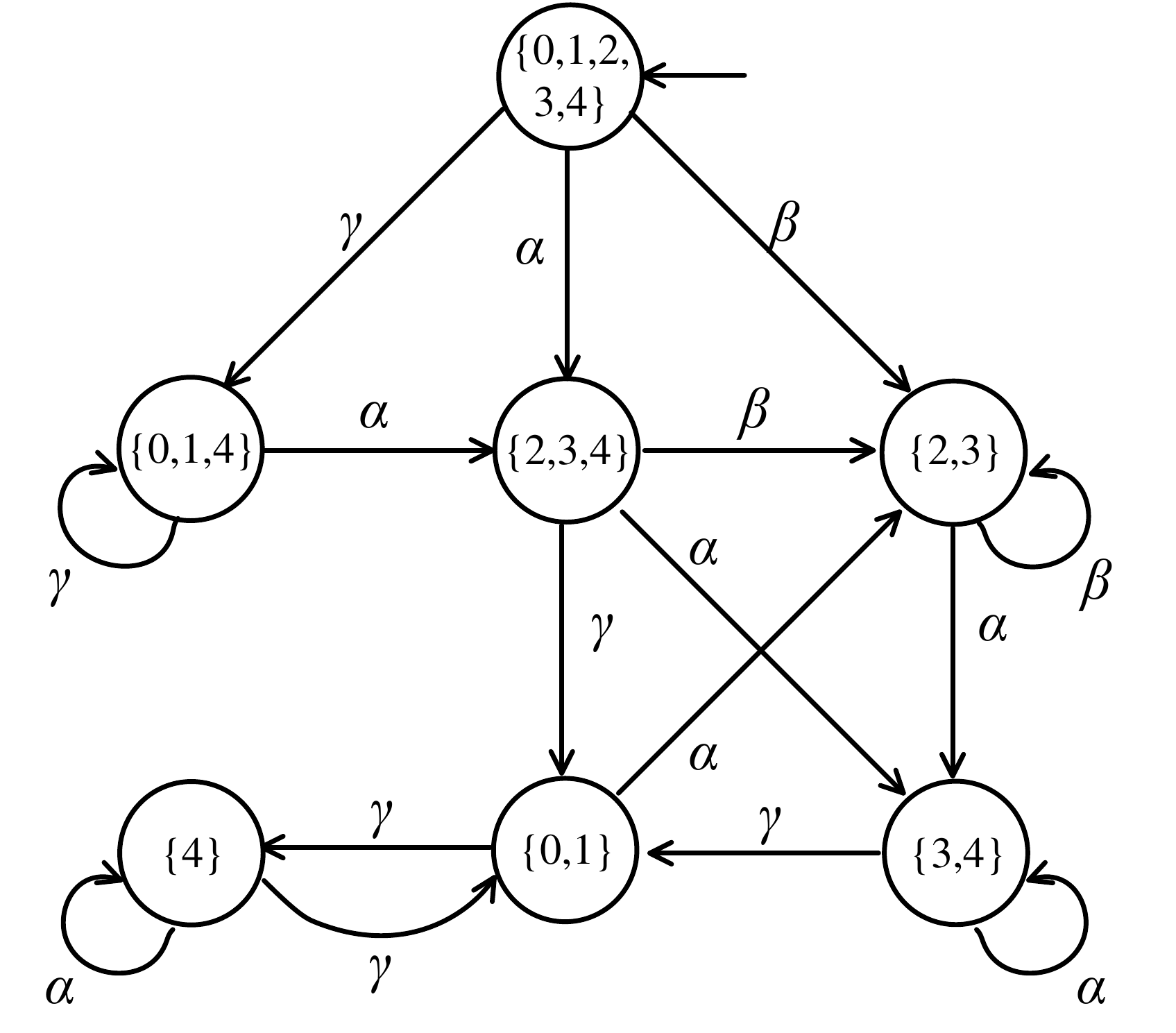}
			
			\caption{{Observer for NFA in} Fig.~\ref{NFA}.}\rm
			\label{ObserverNFA}
		\end{minipage}
		
	\end{figure}
	
\end{example}				

\section{Observation Sequences under Attacks} \label{TransAttack}

In general, {malicious} attacks may corrupt  sequences at the communication channel, {{such} that the} sequence {received} at the sequence estimation unit is unreliable.
In this section, we {propose} a compact way to represent {possibly} matching sequences and {describe} an efficient method to reduce {the number of such sequences that need to be explored}.
In {{the}} next section, {{we devise {another} way to {filter}, among the matching sequences, the sequences that}}  belong  to  the behavior that can be generated by the NFA, {and} {subsequently} use them to {perform} state estimation according to their {costs}. 

Referring to Fig.~\ref{InternalStructure}, if the plant generates {a}
string $s\in \mathcal{L}(G_{nd})$, the observed string at the sensor measuring unit is $\omega=P(s)$.
An attacker may corrupt the output signals produced by the sensor measuring unit by deleting, inserting, or substituting certain types of events. {The resulting tampered observation sequence} is denoted
as $\omega_{A}\in A(\omega)$, where $A(\omega)$ is a set of tampered sequences {that can be} generated by the attacker. Based on $\omega_{A}$, the sequence estimation unit calculates a set of matching sequences $RA(\omega_{A})$ {that} is used to perform state estimation. 

\begin{figure}[htb]
	\begin{minipage}[l]{1\linewidth}
		\centering
		\includegraphics[scale=0.31]{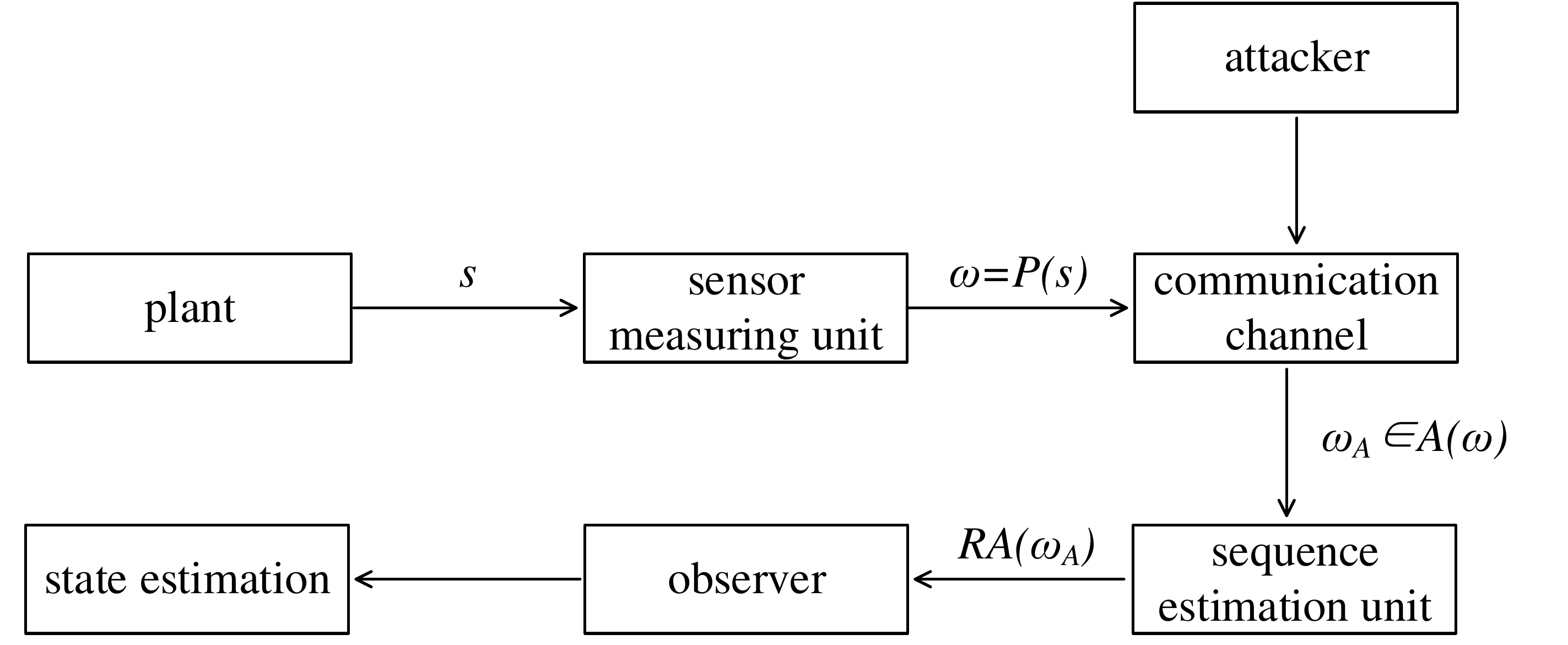}
		
		\caption{Attack setting.}
		\label{InternalStructure}
	\end{minipage}
	
\end{figure}

We {focus} on attacks due to symbol deletions, insertions, and substitutions.
In order to have a general form of attacks, suppose that each event $\sigma_{o} \in \Sigma_{o}$ {can be} associated with {some} arbitrary replacements ({for example, given} $\Sigma_{o}=\{\alpha, \beta, \gamma\}$, $\alpha$ can be replaced by $\beta$ or $\gamma$), {and} some events {may also} be deleted or inserted under attacks.
Note that {we assume that each symbol (in a sequence of symbols)} received at the sequence estimation unit {can only be} related to at most one type of attack. {In other words}, it is not {possible} for {the} attacker to corrupt the same observable event more than once.

{More specifically,} the attacker has the capability to
\begin{enumerate}
	\item delete certain types of events from a set $\Sigma_{D}\subseteq \Sigma_{o}$;
	\item insert certain types of events from a set $\Sigma_{I}\subseteq \Sigma_{o}$;
	\item substitute an event $\sigma_{oi}\in \Sigma_{o}$ with an event $\sigma_{oj}\in \Sigma_{o}$ for some pairs $(\sigma_{oi},\sigma_{oj})$ of events captured in the set $\Sigma_{T}\subseteq (\Sigma_{o}\times\Sigma_{o})\setminus\{(\sigma_{oi}, \sigma_{oi})|\sigma_{oi}\in \Sigma_{o}\}$.
\end{enumerate}

Suppose that each individual deletion, insertion, or substitution of an event is associated with a positive cost.		
Costs {capture in some sense} the {expense} of {the attacker when trying to {alter}} symbols of transmitted sequences at the communication channel. The type of {attack} with costs can be summarized by a table, as illustrated in the following example.

\begin{example}\label{attacked sequences at attacker}
	{Let us consider} the system in Fig.~\ref{NFA}. 
	Suppose that $\Sigma_{I}=\{\beta\}$, $\Sigma_{D}=\{\alpha\}$, and $\Sigma_{T}=\{(\alpha,\beta),(\gamma,\alpha)\}$. When $\alpha$ is corrupted to $\beta$, the attacker spends two units. Similarly, when $\gamma$ is corrupted to $\alpha$, it spends one unit. The cost of one-step deletion of $\alpha$ is three {units} and one-step insertion of $\beta$ is two {units}.

	\begin{table}[htb]
		\caption{Attacks with costs}\label{attacksCost}
		\begin{tabular} {|c| p{0.87cm}<{\centering}|p{0.87cm}<{\centering}| p{0.87cm}<{\centering}|p{0.87cm}<{\centering}|}
			\hline
			\diagbox{original}{attack}& $\alpha$& $\beta$ & $\gamma$ & $\varepsilon$ \\
			
			\hline
			$\alpha$& & 2& &3\\
			\hline
			$\beta$& & & &\\
			\hline
			$\gamma$&1&& &\\
			\hline
			$\varepsilon$& &2&& \\
			\hline
			
		\end{tabular}
	\end{table}
\end{example}

In Table~\ref{attacksCost}, Column 1 represents {the {symbol} originally generated by the system} (including the empty {symbol}) and Row 1 shows {possible corruptions} due to attacks. Note that $\varepsilon$ in Column 5 means that an original event can be deleted, {whereas $\varepsilon$} in Row 5 stands for insertions of events.

Given a sequence of {observations} $\omega\in \Sigma_{o}^*$, we can {systematically} obtain a set of possibly tampered sequences {that can be generated by} the attacker. Suppose that $\omega=\sigma_{o1}\sigma_{o2}...\sigma_{on}$, where $\sigma_{oi}\in\Sigma_{o}$ and $i\in \{1,2,...,n\}$. The set of possibly tampered sequences, denoted by $A(\omega)$, is defined as $A(\omega)=\Sigma_{I}^*(\sigma_{o1}+\sigma_{o1}')\Sigma_{I}^*(\sigma_{o2}+\sigma_{o2}')\Sigma_{I}^*...\Sigma_{I}^*(\sigma_{on}+\sigma_{on}')\Sigma_{I}^*$, where $\sigma_{oi}'=\sigma_{oi,D}+\sigma_{oi,T}$ ($i\in \{1,2,...,n\}$) with

$\sigma_{oi,D}=\left\{
\begin{aligned}
\varepsilon~~~&~ \text{if} ~\sigma_{oi}\in \Sigma_{D}, \\
\sigma_{oi}~&~ \text{if} ~\sigma_{oi}\not\in \Sigma_{D}, \\
\end{aligned}
\right.$\\

$\sigma_{oi,T}=\left\{
\begin{aligned}
\sigma_{oj_1}+\sigma_{oj_2}+...+\sigma_{oj_k}~&~ \text{if}~  \{j_1,j_2,...,j_k\}\\&=\{j|(\sigma_{oi},\sigma_{oj})\in \Sigma_{T}\},\\
\\
\sigma_{oi}~~~~~~~~~~~~~~~~~~~~~~~~~&~ \text{if}~\{j|(\sigma_{oi},\sigma_{oj})\in \Sigma_{T}\}\\&=\emptyset. \\
\end{aligned}
\right.$

Note that in the above expression we adopted the symbol ``$+$'' to represent the logical ``OR'' function.
An upper bound {on the total cost (i.e., the sum of costs over all tampered symbols in the sequence)}, denoted by $C$, is enforced to limit the number of possibly tampered sequences. We use $A_{C}(\omega)$ to restrict $A(\omega)$ to a set of {pairs involving a string from $A(\omega)$ and its associated total cost, where the} maximum {total} cost is $C$. 
Note that it is possible that the same string can be generated by the attacker with different total costs. In this case, we associate with {the string the smallest cost}.
\begin{example}\label{corrupted sequences with cost}
	{Consider again} the system in Fig.~\ref{NFA}.
	Suppose that $\zeta\alpha\alpha\alpha$ is generated by the plant. The attacker can observe $\omega=\alpha\alpha\alpha$ and {may} corrupt this sequence {using any of the} type of {attacks} shown in Table~\ref{attacksCost}. {Therefore, in this case,}
	$A(\omega)=\beta^*(\alpha+\varepsilon+\beta)\beta^*(\alpha+\varepsilon+\beta)\beta^*(\alpha+\varepsilon+\beta)\beta^*=
	\{\alpha\alpha\alpha$,
	$\beta\alpha\alpha\alpha$,
	$\alpha\beta\alpha\alpha$,
	$\alpha\alpha\beta\alpha$,
	$\alpha\alpha\alpha\beta$, 
	$\beta\alpha\alpha$,
	$\alpha\beta\alpha$,
	$\alpha\alpha\beta$,
	$\alpha\alpha$,
	$\beta\beta\alpha$,
	$\beta\alpha\beta$, 
	$\alpha\beta\beta$,
	$\beta\beta\alpha\alpha\alpha, ...\}$. {If we} set {the} upper bound {on the total cost to two}, {we {obtain}} $A_2(\omega)=\{(\alpha\alpha\alpha,0)$,
	$(\beta\alpha\alpha\alpha,2)$, 
	$(\alpha\beta\alpha\alpha,2)$,
	$(\alpha\alpha\beta\alpha,2)$,
	$(\alpha\alpha\alpha\beta,2)$, 
	$(\beta\alpha\alpha,2)$, 
	$(\alpha\beta\alpha,2)$,     
	$(\alpha\alpha\beta,2)\}$.
\end{example}

For {clearer notation}, we  define the set of {deleted} label{{s}}
$D=\{d_{\sigma_{oi}}|\sigma_{oi}\in \Sigma_{D}\}$, where $d_{\sigma_{oi}}$ denotes the deletion of $\sigma_{oi}$; the set of {inserted} label{{s}} $I=\{i_{\sigma_{oj}}|\sigma_{oj}\in \Sigma_{I}\}$, where $i_{\sigma_{oj}}$ denotes the insertion of $\sigma_{oj}$; and the set of {attacked} label{{s}} 
$T=\{t_{\sigma_{oi}\sigma_{oj}}|(\sigma_{oi}, \sigma_{oj})\in \Sigma_{T}\}$, where $t_{\sigma_{oi}\sigma_{oj}}$ denotes the substitution of $\sigma_{oi}$ by $\sigma_{oj}$. {The above attack forms} are captured by the set of {attacked} labels $AT= D\cup I\cup T$.
For example, {suppose} that $\beta$ can be inserted and $\gamma$ can be replaced by $\beta$ at a communication channel under attacks. If $\beta$ is received at the sequence estimation unit,  possible original sequences could be $\varepsilon$, $\beta$, or $\gamma$. {In order to clarify the type of attack, the sequences $\varepsilon$ and $\gamma$ are relabeled respectively by {$i_\beta$} and $t_{\gamma\beta}$}.

{At} the sequence estimation unit, given a possibly tampered sequence $\omega_{A}\in A(\omega)$, we can obtain the set of all matching sequences, denoted by $RA(\omega_{A})$. Suppose that $\omega_{A}=\sigma_{A1}\sigma_{A2}...\sigma_{Am}$, where $\sigma_{Ai}\in\Sigma_{o}$ and $i\in \{1,2,...,m\}$. The set of all matching sequences at the sequence estimation unit is defined as $RA(\omega_{A})=D^*(\sigma_{A1}+\sigma_{A1}')D^*(\sigma_{A2}+\sigma_{A2}')D^*...D^*(\sigma_{Am}+\sigma_{Am}')D^*$, where $\sigma_{Ai}'=\sigma_{Ai,I}+\sigma_{Ai,T}$ {($i\in \{1,2,...,m\}$)} with

$\sigma_{Ai,I
}=\left\{
\begin{aligned}
i_{\sigma_{Ai}}~&~ \text{if} ~\sigma_{Ai}\in \Sigma_{I}, \\
\sigma_{Ai}~&~ \text{if} ~\sigma_{Ai}\not\in \Sigma_{I}, \\
\end{aligned}
\right.$\\

$\sigma_{Ai,T}=\left\{
\begin{aligned}
t_{\sigma_{Aj_1}\sigma_{Ai}}+t_{\sigma_{Aj_2}\sigma_{Ai}}+...+t_{\sigma_{Aj_{k'}}\sigma_{Ai}}~~~~~~~~~~~~~\\~~~~~ \text{if}~  \{j_1,j_2,...,j_{k'}\}=\{j|(\sigma_{Aj},\sigma_{Ai})\in \Sigma_{T}\},\\
\\
\sigma_{Ai}~~~~~~~~~~~~~~~~~~ \text{if} ~\{j|(\sigma_{Aj},\sigma_{Ai})\in \Sigma_{T}\}=\emptyset.
\end{aligned}
\right.$

Similarly, each sequence $\omega_{R}\in RA(\omega_{A})$ can be augmented with a cost value. 
Let $c_{t_{\sigma_{oi}\sigma_{oj}}}$, $c_{d_{\sigma_{oi}}}$, and $c_{i_{\sigma_{oj}}}$ respectively denote the costs of recovering {one-step} substitution of $\sigma_{oj}$ for $\sigma_{oi}$, deletion of  event $\sigma_{oi}\in \Sigma_{D}$, and insertion of event $\sigma_{oj}\in \Sigma_{I}$, where $c_{t_{\sigma_{oi}\sigma_{oj}}},c_{d_{\sigma_{oi}}},c_{i_{\sigma_{oj}}} > 0$.

We introduce a {cost function} $\Pi_{c}: (\Sigma_{o}\cup AT)^* \rightarrow \mathbb{N}$ from a matching sequence to its cost, where $\mathbb{N}=\{0,1,2,3,...\}$. 
More specifically, $\Pi_{c}$ is used to accumulate {the} total cost of attacks occurred at each matching {sequence}. The cost function $\Pi_{c}$ can be defined recursively as:
\[
\Pi_{c}(\sigma_{R})=
\begin{cases}
0& \text{if}~\sigma_{R} \in \Sigma_{o}\cup\{\varepsilon\},\\
c_{d_{\sigma_{oi}}}& \text{if}~\sigma_{R}=d_{\sigma_{oi}} \in D, \\
c_{i_{\sigma_{oj}}}& \text{if}~\sigma_{R}=i_{\sigma_{oj}} \in I,\\
c_{t_{\sigma_{oi}\sigma_{oj}}}~ &\text{if} ~\sigma_{R}=t_{\sigma_{oi}\sigma_{oj}} \in T,\\
\end{cases}
\]
and $\Pi_{c}(\omega_{R}\sigma_{R})=\Pi_{c}(\omega_{R})+\Pi_{c}(\sigma_{R})$, for $\omega_{R}\in(\Sigma_{o}\cup AT)^*$, $\sigma_{R}\in \Sigma_{o}\cup AT$. {If we} set the same upper bound {on the total cost to} $C$ among {each sequence} in $RA(\omega_{A})$, then a set of matching sequences with maximum cost $C$, denoted by $RA_{C}(\omega_{A})$, can be obtained.

The action projection of attacker $\hat{P}: (\Sigma_{o}\cup AT)^* \rightarrow \Sigma_{o}^*$ is defined as:
\[
\hat{P}(\sigma_{R})=\left\{
\begin{aligned}
\sigma_{R}~&~ \text{if} ~\sigma_{R}\in \Sigma_{o}, \\
\sigma_{oi}~&~ \text{if} ~\sigma_{R}=d_{\sigma_{oi}}\in D, \\
\varepsilon~~~&~ \text{if} ~\sigma_{R}=i_{\sigma_{oj}}\in I\cup\{\varepsilon\}, \\
\sigma_{oi}~&~ \text{if} ~\sigma_{R}=t_{\sigma_{oi}\sigma_{oj}}\in T, \\
\end{aligned}
\right.
\]	
and $\hat{P}(\omega_{R}\sigma_{R})=\hat{P}(\omega_{R})\hat{P}(\sigma_{R})$ for $\omega_{R}\in(\Sigma_{o}\cup AT)^*$, $\sigma_{R}\in \Sigma_{o}\cup AT$.
For simplicity, $\hat{P}$ is also used to project $(\Sigma_{o}\cup AT)^*\times \mathbb{N} \rightarrow \Sigma_{o}^*\times \mathbb{N}$, which is defined as for $\omega_{R}\in(\Sigma_{o}\cup AT)^*$, $c\in \mathbb{N}$, $\hat{P}((\omega_{R},c)):=(\hat{P}(\omega_{R}),c)$.

\begin{example}\label{recoveried sequences with cost}
	Consider the {string} $\omega=\zeta\alpha\alpha\alpha$ and $A(\omega)$ already discussed in Example~\ref{corrupted sequences with cost}. Assume that the attacker corrupts $\omega$ to $\omega_{A}=\beta\alpha\alpha \in A(\omega)$. {We have}  $RA(\omega_{A})={d_\alpha}^*(\beta+t_{\alpha\beta}+i_\beta){d_\alpha}^*(\alpha+t_{\gamma\alpha}){d_\alpha}^*(\alpha+t_{\gamma\alpha}){d_\alpha}^*=$
	$\{\beta\alpha\alpha$,
	$\beta t_{\gamma\alpha}\alpha$,
	$\beta\alpha t_{\gamma\alpha}$,
	$i_\beta\alpha\alpha$,
	$t_{\alpha\beta}\alpha\alpha$,
	$\beta t_{\gamma\alpha}t_{\gamma\alpha}$,	
	$d_\alpha\beta\alpha\alpha$,	
	$\beta d_\alpha\alpha\alpha$,	
	$\beta\alpha d_\alpha\alpha$,	
	$\beta\alpha\alpha d_\alpha,...\}$.
	{If we} set the upper bound on the total cost to two, {we can obtain} $RA_2(\omega_{A})=\{(\beta\alpha\alpha,0)$,
	$(\beta t_{\gamma\alpha}\alpha,1)$,
	$(\beta\alpha t_{\gamma\alpha},1)$,
	$(i_\beta\alpha\alpha,2)$,
	$(t_{\alpha\beta}\alpha\alpha,2)$,	
	$(\beta t_{\gamma\alpha}t_{\gamma\alpha},2)\}$. {Note that}
	$\hat{P}(RA(\omega_{A}))=\alpha^\ast(\beta+\alpha+\varepsilon)\alpha^\ast(\alpha+\gamma)\alpha^\ast(\alpha+\gamma)\alpha^\ast$ and
	$\hat{P}(RA_2(\omega_{A}))=\{(\beta\alpha\alpha,0)$,
	$(\beta \gamma\alpha,1)$,
	$(\beta\alpha\gamma,1)$,
	$(\varepsilon\alpha\alpha,2)$,
	$(\alpha\alpha\alpha,2)$,	
	$(\beta\gamma\gamma,2)\}$.
\end{example}

Let the total cost of {an attack that corrupts $\omega$ to $\omega_A$} be $c_{A}\in \mathbb{N}$, and the upper bound on the total cost {to} be $C$. If $c_{A}\leq C$, {we write} $(\omega_{A},c_{A})\in A_{C}(\omega)$. Similarly, let the total cost of attacks incurred at a matching sequence $\omega_{R}\in RA(\omega_{A})$ be $c_{R}\in\mathbb{N}$. Note that
$(\omega_{R},c_{R})\in RA_{C}(\omega_{A})$ if $c_{R}\leq C$. The following corollary is an immediate implication of the above discussions.\\

\begin{corollary}\label{corollary1} 
	Given an observation sequence $\omega\in \Sigma^{\ast}_{o}$, suppose that $\omega_{A} \in A(\omega)$ is generated by an attacker by investing $c_{A}$ units. The sequence estimation unit calculates the set of matching sequences $RA(\omega_{A})$. $RA_C(\omega_{A})$ is the set of matching sequences with maximum cost {$C\geq c_R$}. In $RA_C(\omega_{A})$, 
	$ \omega_{A}\in RA(\omega_{A})$ with zero cost and there exists $\omega_{R} \in RA(\omega_{A})$ with cost $c_{A}$, $c_{A}\leq C$, such that $\omega = \hat{P}(\omega_{R})$.
\end{corollary}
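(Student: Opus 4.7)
The plan is to split the corollary into its two independent assertions: (i) the untouched observation $\omega_A$ sits inside $RA(\omega_A)$ with zero cost, and (ii) there is some labeled matching sequence $\omega_R\in RA(\omega_A)$ of cost exactly $c_A$ whose attacker-action projection recovers the original $\omega$. Both claims are structural consequences of how $RA(\omega_A)$, $\Pi_c$ and $\hat P$ are defined in the immediately preceding paragraphs, so the proof is essentially a careful unpacking of definitions plus one explicit construction.

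For assertion (i), I would write $\omega_A = \sigma_{A1}\sigma_{A2}\cdots\sigma_{Am}$ and read off the defining regular expression $RA(\omega_A)=D^*(\sigma_{A1}+\sigma_{A1}')D^*\cdots D^*(\sigma_{Am}+\sigma_{Am}')D^*$. Choosing the plain symbol $\sigma_{Ai}$ in each factor and $\varepsilon$ in every $D^*$ slot shows $\omega_A\in RA(\omega_A)$. Since each $\sigma_{Ai}\in\Sigma_o$, the first clause of $\Pi_c$ contributes $0$, so $\Pi_c(\omega_A)=0\leq C$ and hence $(\omega_A,0)\in RA_C(\omega_A)$.

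For assertion (ii), I would use the hypothesis that $\omega_A$ was actually produced from $\omega=\sigma_{o1}\cdots\sigma_{on}$ by a concrete attack trajectory of total cost $c_A$. Each letter of $\omega_A$ falls into exactly one of three classes: (a) an unaltered symbol copied from $\omega$; (b) a symbol from $\Sigma_I$ inserted by the attacker; or (c) the image of some $\sigma_{oi}\in\omega$ under a substitution $(\sigma_{oi},\sigma_{Ai})\in\Sigma_T$. Moreover, some letters of $\omega$ may have been deleted. I would construct $\omega_R$ letter-by-letter by replacing each attacker letter of type (a) by itself, each type (b) letter by the label $i_{\sigma_{Ai}}$ (allowed because $\sigma_{Ai}\in\Sigma_I$, so $i_{\sigma_{Ai}}$ appears in $\sigma_{Ai,I}$), each type (c) letter by $t_{\sigma_{oi}\sigma_{Ai}}$ (allowed because $(\sigma_{oi},\sigma_{Ai})\in\Sigma_T$, so $t_{\sigma_{oi}\sigma_{Ai}}$ appears in $\sigma_{Ai,T}$), and inserting each deleted $\sigma_{oi}$ as a label $d_{\sigma_{oi}}$ into the appropriate $D^*$ slot of the factorisation. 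That $\omega_R\in RA(\omega_A)$ is then immediate from the regular expression; applying $\hat P$ letter by letter gives $\sigma_{oi}$ from every $d_{\sigma_{oi}}$, $\sigma_{oi}$ from every $t_{\sigma_{oi}\sigma_{Ai}}$, $\varepsilon$ from every $i_{\sigma_{Ai}}$, and $\sigma_{Ai}$ from every unaltered letter, recovering exactly $\omega$. Summing $\Pi_c$ over these labels reproduces $c_A$ by the very definition of the per-attack costs.

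The one place that demands care, and which I would highlight as the main (bookkeeping) obstacle, is the placement of the $d_{\sigma_{oi}}$ labels into the correct $D^*$ slots: several consecutive deletions in $\omega$ that occur between the same two surviving attacker letters must be listed in order inside a single $D^*$ factor, and deletions occurring before the first or after the last surviving letter belong in the leading or trailing $D^*$. Once a formal indexing map between the positions of $\omega$, the positions of $\omega_A$, and the slots of the $RA(\omega_A)$-factorisation is fixed, verifying $\hat P(\omega_R)=\omega$ and $\Pi_c(\omega_R)=c_A$ reduces to an induction on $|\omega|$ with no hidden subtleties. The bound $c_A\leq C$ is inherited directly from the hypothesis $(\omega_A,c_A)\in A_C(\omega)$, so $(\omega_R,c_A)\in RA_C(\omega_A)$, completing the proof.
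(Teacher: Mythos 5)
Your proof is correct, and it follows the same route the paper intends: the paper states this corollary as ``an immediate implication of the above discussions'' and omits a proof, and your argument is exactly the careful unpacking of the definitions of $RA(\omega_A)$, $\Pi_c$, and $\hat P$ (plus the explicit relabeling construction for the attack trajectory) that makes that implication precise. No gaps; the bookkeeping of the $d_{\sigma_{oi}}$ labels into the $D^*$ slots that you flag is indeed the only point requiring care, and you handle it correctly.
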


The proof of the following proposition follows directly from the definitions and {thus} it is omitted. {An illustration of the setting described in the proposition can be found in Fig.~\ref{AOmegaRAOmega}.}
\begin{figure}[htbp]
	\begin{minipage}[l]{1\linewidth}
		\centering
		\includegraphics[scale=0.43]{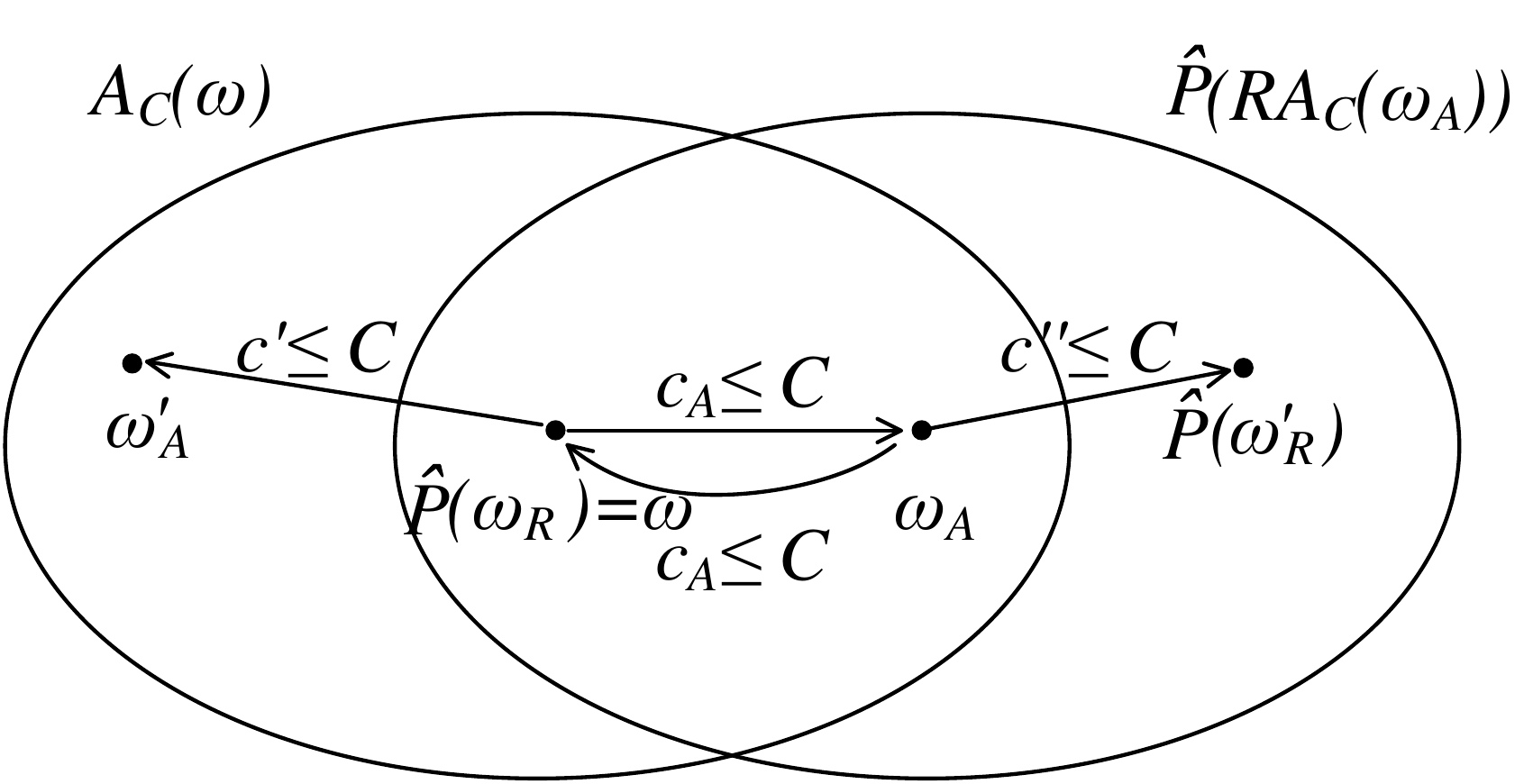}
		
		\caption{Relationship between $A_C(\omega)$ and  $RA_C(\omega_{A})$.}
		\label{AOmegaRAOmega}
	\end{minipage}
	
\end{figure}

\begin{proposition}
	Given a set of tampered sequences $A(\omega)$, a set of matching sequences $RA(\omega_{A})$, {let} $A_C(\omega)$ and $RA_C(\omega_{A})$ respectively denote the sets of {tampered and matching sequences} with upper bound $C$ on the total cost.
	
	1) For all $\omega_{A}\in A(\omega)$, there exists $\omega_{R}\in RA(\omega_{A})$ such that $\omega=\hat{P}(\omega_{R})$.
	
	2) For all $(\omega_{A},c_{A})\in A_{C}(\omega)$, there exists $(\omega_{R},c_{A})\in RA_{C}(\omega_{A})$ such that $(\omega,c_{A})=(\hat{P}(\omega_{R}),c_{A})$.
\end{proposition}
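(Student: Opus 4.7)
The plan is to prove both parts by an explicit constructive argument that inverts the attack step by step. Write $\omega = \sigma_{o1}\sigma_{o2}\cdots\sigma_{on}$. By the definition of $A(\omega)$, any $\omega_A\in A(\omega)$ admits a decomposition
\[
\omega_A \;=\; u_0\,\tau_1\,u_1\,\tau_2\,u_2\cdots u_{n-1}\,\tau_n\,u_n,
\]
where each $u_k\in\Sigma_I^\ast$ is a (possibly empty) block of inserted symbols and each $\tau_i$ is either (i) $\sigma_{oi}$ itself (no attack on position $i$), (ii) empty, in which case $\sigma_{oi}\in\Sigma_D$ (deletion), or (iii) some $\sigma_{oj}$ with $(\sigma_{oi},\sigma_{oj})\in\Sigma_T$ (substitution). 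This decomposition is exactly what the regular expression defining $A(\omega)$ enumerates.

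For part 1, I would define $\omega_R$ by relabeling this decomposition according to the attack type: each symbol of each block $u_k=\sigma_{oj_1}\cdots\sigma_{oj_m}$ is replaced by $i_{\sigma_{oj_1}}\cdots i_{\sigma_{oj_m}}$; if $\tau_i=\sigma_{oi}$, then position $i$ is recorded as $\sigma_{oi}$; if $\tau_i=\varepsilon$ with $\sigma_{oi}\in\Sigma_D$, it is recorded as $d_{\sigma_{oi}}$; if $\tau_i=\sigma_{oj}$ with $(\sigma_{oi},\sigma_{oj})\in\Sigma_T$, it is recorded as $t_{\sigma_{oi}\sigma_{oj}}$. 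I then need two verifications. First, $\omega_R\in RA(\omega_A)$: reading $\omega_A=\sigma_{A1}\cdots\sigma_{Am}$ and matching against the regular expression $D^\ast(\sigma_{A\ell}+\sigma_{A\ell}')D^\ast\cdots$ defining $RA(\omega_A)$, the $d_{\sigma_{oi}}$ labels (for deleted positions) fit into the $D^\ast$ factors, while each $\sigma_{A\ell}$ of $\omega_A$ is matched either as itself (kept), as $i_{\sigma_{A\ell}}$ (when it arose from an insertion), or as $t_{\sigma_{oi}\sigma_{A\ell}}$ (when it arose from a substitution). Each of these alternatives is one of the disjuncts $\sigma_{A\ell}$, $\sigma_{A\ell,I}$, $\sigma_{A\ell,T}$ allowed by the definition of $RA$. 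Second, $\hat P(\omega_R)=\omega$: applying $\hat P$ symbol-wise deletes the $i_{\sigma_{oj}}$ contributions from $u_k$-blocks (by the $I\cup\{\varepsilon\}$ branch), maps $d_{\sigma_{oi}}\mapsto\sigma_{oi}$ and $t_{\sigma_{oi}\sigma_{oj}}\mapsto\sigma_{oi}$, and leaves untouched positions as $\sigma_{oi}$, reproducing $\omega$ exactly in order.

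For part 2, I would reuse the same $\omega_R$ and argue that $\Pi_c(\omega_R)=c_A$. By definition, $c_A$ is the sum of the individual costs incurred in producing $\omega_A$ from $\omega$: one $c_{i_{\sigma_{oj}}}$ for each inserted symbol, one $c_{d_{\sigma_{oi}}}$ for each deleted position, and one $c_{t_{\sigma_{oi}\sigma_{oj}}}$ for each substituted position, with no cost at untouched positions. Since $\Pi_c$ accumulates precisely the same contributions over the labels $i_\bullet$, $d_\bullet$, $t_{\bullet\bullet}$ appearing in $\omega_R$ (and zero on plain $\Sigma_o$-symbols), we obtain $\Pi_c(\omega_R)=c_A\leq C$, so $(\omega_R,c_A)\in RA_C(\omega_A)$. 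Combined with $\hat P(\omega_R)=\omega$ from part 1, this yields $(\hat P(\omega_R),c_A)=(\omega,c_A)$, as required.

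The only potentially delicate step is the decomposition of $\omega_A$ witnessing membership in $A(\omega)$: because insertions, deletions and substitutions can be interleaved, a given $\omega_A$ may admit several decompositions (and hence several different pre-images $\omega_R$), but the proposition only asserts existence, so fixing one such decomposition suffices. The rest is routine matching of the two regular-expression templates and the definition of $\hat P$; I expect no genuine obstacle beyond careful bookkeeping of indices.
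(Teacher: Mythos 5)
Your construction is correct and is precisely the direct verification from the definitions that the paper itself declines to spell out (it states that the proof ``follows directly from the definitions'' and omits it). The one point worth making explicit is that for part 2 you must fix the decomposition of $\omega_{A}$ whose cost equals the $c_{A}$ recorded in $A_{C}(\omega)$ (since the paper associates each tampered string with its smallest generating cost), which your closing remark about choosing one witnessing decomposition already covers.
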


\section{Least Cost State Estimation under Attacks}\label{StateEstimationAttack}
\begin{StateEstimationAttack}
	Consider a DES {modeled} by an NFA $G_{nd}$ and a sensor measuring unit {able to measure and report the sequence of observable events $\omega$ to an observer. An attacker may intercept and alter, at a certain cost, symbols in the reported sequence of observations. Given an upper bound on the total cost $C$ the attacker incurred, the}  observer needs to estimate possible states {(and their associated costs) according to the possibly corrupted sequence $\omega_A$} received from the sequence estimation unit {(refer to Fig.~\ref{InternalStructure})}.
\end{StateEstimationAttack}

We now argue that the set of matching sequences $RA(\omega_{A}
)$ can be  described by the language of an observation automaton, denoted by $G_{s}=(X_{s}, \Sigma_{o}\cup AT, \delta_{s}, x_{0,s})$, where {$X_{s}= \{ 0, 1, 2, ..., | \omega_A | \}$ is the} set of states {(which can be thought as observation stages)}, $\delta_{s}: X_{s}\times (\Sigma_{o}\cup AT) \rightarrow X_{s}$ {is the state transition function, and {$x_{0,s}=0\in X_{s}$}} is the initial state. Suppose that $\omega_{A}=\sigma_{A1}\sigma_{A2}...\sigma_{Am}\in \Sigma_{o}^\ast$ and $RA(\omega_{A})=D^*(\sigma_{A1}+\sigma_{A1}')D^*(\sigma_{A2}+\sigma_{A2}')D^*...D^*(\sigma_{Am}+\sigma_{Am}')D^*\subseteq (\Sigma_{o}\cup AT)^\ast$ (see the definition of the set of all matching sequences). For $x_{i,s}\in X_s$, $i\in\{0,1,2,...,|\omega_{A}|\}=\{0,1,2,...,m\}$, and $\sigma_R \in \Sigma_{o}\cup AT$, the state transition function $\delta_{s}$ is defined as: 
\[\delta_{s}(x_{i,s},\sigma_R)=\left\{
\begin{aligned}
x_{i,s}~~~~&~ \text{if} ~\sigma_{R} \in D, \\
x_{{i+1},s}~ &~ \text{if} ~\sigma_{R}=\sigma_{Ai+1}\vee \sigma_{Ai+1}',
\end{aligned}
\right.\]
where $x_{{i+1},s}\in X_s$ represents the observation stage subsequent to $x_{i,s}$.

If we set the upper bound on the total cost to $C+1$, {we argue that}
$RA_{C+1}(\omega_{A})$ can be described by the language of a DFA, denoted by $G_{sc}(C+1)=(X_{sc}, \Sigma_{o}\cup AT, \delta_{sc}, x_{0,sc})$, where $X_{sc}\subseteq X_s \times \{ 0, 1, 2,..., C+1 \}$ is a set of states with costs,  $x_{0,sc}=(0,0)$ is the initial state, and
$\delta_{sc}:X_{sc} \times (\Sigma_{o}\cup AT) \rightarrow X_{sc}$ is the state transition function, defined as follows: for $(x_{s}, c_{s})\in X_{sc}$  and $\sigma_{R}\in \Sigma_{o}\cup AT$, {we have}
$\delta_{sc}((x_{s}, c_{s}),\sigma_{R})=
(\delta_{s}(x_{s},\sigma_{R}),\min(c_{s}+\Pi_{c}(\sigma_{R}),C+1))$ (undefined if $\delta_s(x_s, \sigma_R)$ is undefined).

The state transition function can be extended to the domain $X_{sc} \times (\Sigma_o \cup AT)^*$ in the standard recursive manner:
$\delta_{sc}((x_{s}, c_{s}),\varepsilon)=(x_{s}, c_{s})$, and $\delta_{sc}((x_{s}, c_{s}),\sigma_{R}\omega_{R})=\delta_{sc}((\delta_{sc}(x_{s}, c_{s}),\sigma_{R}),\omega_{R})$ for {$\sigma_R \in (\Sigma_o \cup AT)$}, $\omega_{R}\in(\Sigma_{o}\cup AT)^*$.

DFA $G_{sc}(C+1)$ has a special structure, which becomes more apparent if we
draw {states of the form} $(x_{s}, 0),(x_{s}, 1),...,$ $(x_{s}, C+1)$, for $x_{s}\in X_{s}$, in a column and {states of the} form $(x_{s1}, c),(x_{s2}, c),..., (x_{s|X_{s}|}, c)$, for $c\in\{0,1,...,C+1\}$ in a row.
{We will also} call each column of $G_{sc}(C+1)$ a stage to reflect the notion of the observation step since each forward transition corresponds to a new observation. We illustrate this via the following example.

\begin{example}\label{ExampleSequence}
	{Continuing Example 4, consider} a possibly tampered sequence $\omega_{A}=\beta\alpha\alpha\in A(\omega)$ and {the} set of all matching sequences  $RA(\omega_{A})={d_\alpha}^*(\beta+t_{\alpha\beta}+i_\beta){d_\alpha}^*(\alpha+t_{\gamma\alpha}){d_\alpha}^*(\alpha+t_{\gamma\alpha}){d_\alpha}^*$.
	We {can} describe the set of all matching sequences  {{using {$G_s$} as {shown}}} in Fig.~\ref{AllSequences}. {If} the upper bound on the total cost {satisfies $C+1=3$, and costs are {given} as in Table~\ref{attacksCost}, the} automaton {$G_{sc}(C+1)$} is {portrayed} in Fig.~\ref{AllSequencescost}. Note that the states with shadow {cannot be} reached in $G_{sc}(C+1)$ {from the initial state} (and can safely be ignored). The initial state of  $G_{sc}(C+1)$ is $(0,0)$ {since initially} the observation automaton is at Step 0 with zero cost. If $\beta$ is observed at the sequence estimation unit, $G_s$ goes to Step 1 with zero cost. If $\beta$ is inserted by the attacker, $G_s$ goes to Step 1 with two units of costs, i.e., $i_\beta$ leads {from} state $(0,0)$ to state $(1,2)$. All reachable states of $G_{sc}(C+1)$ are limited to have maximum three units of costs. {For} instance, $t_{\alpha\beta}$ leads {from} state $(0,3)$ to state $(1,3)$ instead of state $(1,5)$.

	\begin{figure}[htb]
		\begin{minipage}[t]{1\linewidth}
			\centering
			\includegraphics[scale=0.95]{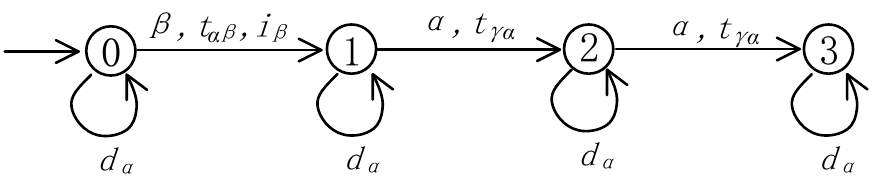}
			
			\caption{Automaton $G_s$ of all matching sequences.}\rm
			\label{AllSequences}
		\end{minipage}
	\end{figure}

	\begin{figure}[htb]
		\begin{minipage}[t]{1\linewidth}
			\centering
			\includegraphics[scale=0.85]{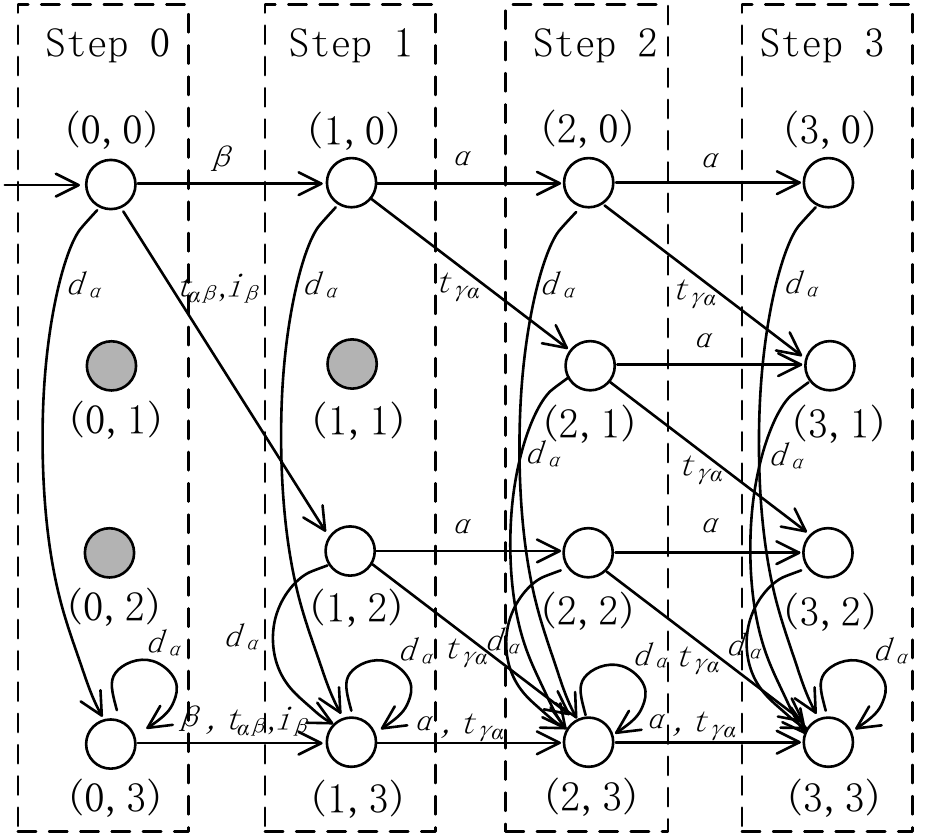}
			
			\caption{Automaton $G_{sc}$ of all matching sequences with costs.}\rm
			\label{AllSequencescost}
		\end{minipage}
	\end{figure}

\end{example}

{Now,} {let us} consider a certain type of parallel operation of $G_{nd}=(X, \Sigma, \delta, X_0)$ and $G_{sc}(C+1)=(X_{sc}, \Sigma_{o}\cup AT, \delta_{sc}, x_{0,sc})$, i.e., by constructing an NFA $H=AC(G_{nd}||G_{sc}(C+1))$ \cite{cassandras2009introduction}, {where $AC(G_{nd}||G_{sc}(C+1))$ represents the {accessible part of {a special type of}} {synchronous composition} of $G_{nd}$ and $G_{sc}(C+1)$.} 
This finite automaton is denoted by $H=(X_{f}, \Sigma_{o}\cup AT, \delta_{f}, X_{0,f})$, where  $X_{f}\subseteq X\times X_{sc}$ is the set of states, $X_{0,f}=\{(x_0,x_{0,sc})|x_0\in X_0\}\subseteq X_{f}$ is a set of initial states, and $\delta_{f}:X_{f} \times (\Sigma_{o}\cup AT) \rightarrow 2^{X_{f}}$ is the state transition function, defined as:
$\delta_{f}((x_{i}, x_{scj}),\sigma_{R})$
\[=\left\{
\begin{aligned}
R(\{x_{i}\},\sigma_{R})\times \{\delta_{sc}(x_{scj},\sigma_{R})\}~ ~~~~&~ \text{if} ~\sigma_{R} \in \Sigma_{o}, \\
R(\{x_{i}\},\hat{P}(\sigma_{R}))\times \{\delta_{sc}(x_{scj},\sigma_{R})\}~ &~ \text{if} ~\sigma_{R} \in AT,
\end{aligned}
\right.
\] 
\noindent where $x_{i} \in X$, $x_{scj} \in X_{sc}$, $\sigma_{R}\in \Sigma_{o}\cup AT$, and $AT= D\cup I\cup T$. 
The domain of
$\delta_{f}$ can be extended to $X_{f} \times (\Sigma_{o}\cup AT)^*$ in the usual way, i.e., for $x_{f}\in X_{f}$, $\omega_{R}\in  (\Sigma_{o}\cup AT)^*$, $\sigma_{R}\in  (\Sigma_{o}\cup AT)$, we have $\delta_{f}(x_{f},\sigma_{R}\omega_{R})=\delta_{f}(\delta_{f}(x_{f},\sigma_{R}),\omega_{R})=\cup_{x_{f}'\in \delta_{f}(x_{f},\sigma_{R})}\delta_{f}(x_{f}',\omega_{R})$.

We construct a \emph{reduced-state} version of $H$, denoted {as} $RH$, by only {maintaining} $X_{Rf}:=\{(x_{i}, x_{s},c)\in X_{f}|\nexists (x_{i}, x_{s},c')\in X_{f}, c'< c\}$ and related transitions. 
$RH$ is defined as a four-tuple NFA $RH=AC(X_{Rf}, \Sigma_{o}\cup AT, \delta_{f}, X_{0,f})$ in the usual way.

The reduced-state version of the parallel composition $H$ can be depicted similarly as $G_{sc}(C+1)$: {states of the form} $(x_{1}, p_{x_{sj}}),(x_{2}, p_{x_{sj}}),..., (x_{|X|}, p_{x_{sj}})$, for $p_{{x}_{sj}}\subseteq X_{sc}$, $x_{sj}\in X_{s}$ {appear} in a column and {states of the} form $(x_{i}, p_{x_{s1}}),(x_{i}, p_{x_{s2}}),..., (x_{i}, p_{x_{{s}|\omega_{A}|}})$, for $x_{i}\in X$ {appear} in a row, where $p_{x_{sj}}:=\{(x_{sj},c)|c\in \mathbb{N}, c\leq C+1\}$. This is clarified in the example below.

\begin{example}
	{Consider again} the system in Fig.~\ref{NFA} as in Examples 1--5.
	The reduced state transition cost diagram of $H$ is shown in Fig.~\ref{RstepObserver}. 
	Since $X_0=\{0,1,2,3,4\}$, $RH$ starts at {Step 0 with} initial column $(0,0,0),(1,0,0),(2,0,0),(3,0,0),(4,0,0)$. If the state estimation unit observes $\beta$, $RH$ reaches Step 1. The original event observed at the sensor measuring unit can be $\beta, \varepsilon, \alpha$. 	At state $(1,0,0)$, $t_{\alpha\beta}$ is also feasible and reaches states $(2,1,2)$ and $(3,1,2)$ in $H$.  Since there exist states $(2,1,0)$ and $(3,1,0)$ {with lower costs}, $\delta_{f}((1,0,0),t_{\alpha\beta})$ {does not appear in} $RH$ (marked with a dotted line). {Similarly, since states (2,1,0) and (3,1,0) appear, the transitions $i_{\beta}$ do not appear from (2,0,0) and (3,0,0).} \\
	\begin{figure*}
		\begin{minipage}[htb]{1\linewidth}
			\centering
			\includegraphics[scale=0.4]{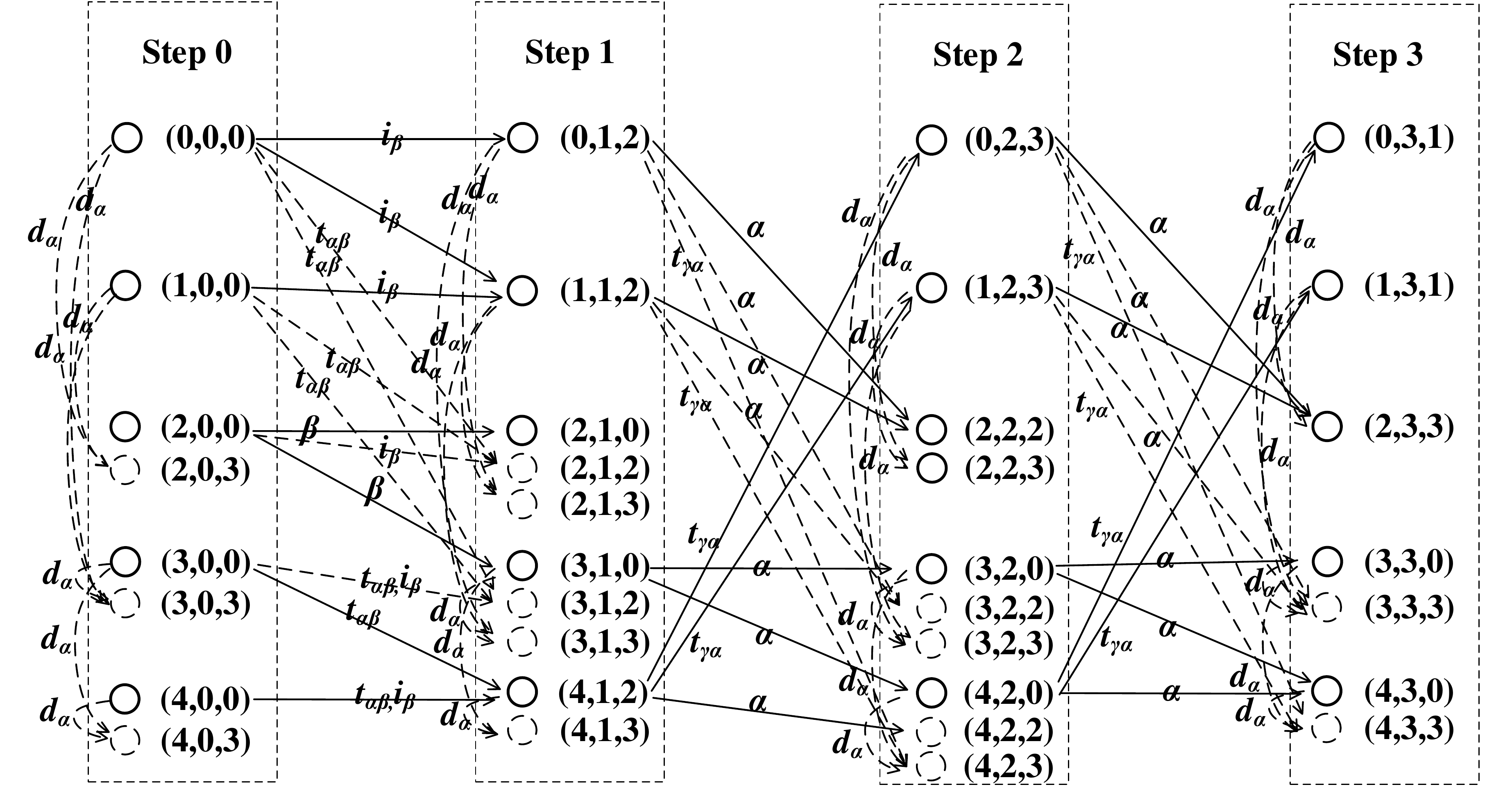}
			
			\caption{The reduced-state version of $H$.}\rm
			\label{RstepObserver}
		\end{minipage}
	\end{figure*}
	
\end{example}

\begin{definition}
	Given a possibly tampered sequence $\omega_{A}\in \Sigma_{o}^\ast$, a set of possible {final (ending)} states in $H$ with the least cost is defined as:
	$E_{H}(\omega_{A})=\{(x_{i},c)|(\exists(x_{i},x_{{s}|\omega_{A}|},c)\in  X_{f}) (\nexists (x_{i},x_{{s}|\omega_{A}|},c')\in  X_{f}) ~c'<c\}$.
\end{definition}

\begin{definition}
	Given a possibly tampered sequence $\omega_{A}\in \Sigma_{o}^\ast$, a set of possible {final (ending) states} in $RH$ with the least cost is defined as:
	$E_{RH}(\omega_{A})=\{(x_{i},c)|\exists(x_{i},x_{{s}|\omega_{A}|},c)\in  X_{Rf}\}$.
\end{definition}

\begin{proposition}\label{Pro2}
	Given a possibly tampered sequence $\omega_{A}=\sigma_{A1}\sigma_{A2}...\sigma_{Am}\in \Sigma_{o}^{\ast}$, where $\sigma_{Ai}\in \Sigma_{o}$ and $i\in\{1,2,...,m\}$, for any estimated state with the least cost $(x_i,c)\in E_{H}(\omega_{A})$, $(x_i,c)\in E_{RH}(\omega_{A})$ {holds}.
\end{proposition}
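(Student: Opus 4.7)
The plan is to observe that the conclusion is essentially definitional once reachability is properly accounted for. If $(x_i, c) \in E_H(\omega_A)$, then $(x_i, x_{s|\omega_A|}, c) \in X_f$ and $c = \min\{c' : (x_i, x_{s|\omega_A|}, c') \in X_f\}$, which is precisely the defining condition for $(x_i, x_{s|\omega_A|}, c)$ to belong to $X_{Rf}$ as a subset of $X_f$. The substantive step is therefore to verify that this triple is still accessible from $X_{0,f}$ after transitions leading into non-minimum-cost triples are pruned away, i.e., that it lies in the accessible part $AC(X_{Rf},\ldots)$ defining $RH$.

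I would establish this reachability via the following lemma, proved by a rerouting argument: every triple $(y, x_s, \hat c) \in X_{Rf}$ reachable in $H$ is also reachable in $RH$, where $\hat c := \min\{c : (y, x_s, c) \in X_f\}$. Given any $H$-path $q_0 \xrightarrow{\sigma^1} q_1 \cdots \xrightarrow{\sigma^\ell} q_\ell = (y, x_s, \hat c)$ from some $q_0 \in X_{0,f}$, suppose an intermediate $q_k = (y_k, x_{s_k}, c_k)$ has $c_k > \hat c_k := \min\{c : (y_k, x_{s_k}, c) \in X_f\}$. I would replace the prefix $q_0 \cdots q_k$ by any $H$-path ending at $(y_k, x_{s_k}, \hat c_k)$ and splice on the unchanged suffix $\sigma^{k+1} \cdots \sigma^\ell$. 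Because the $\delta$ and $\delta_s$ components of $\delta_f$ depend only on $y_k$ and $x_{s_k}$ and not on the accumulated cost, the spliced suffix is legal; the monotonicity of the capped update rule of $\delta_{sc}$ in the cost argument then yields a new endpoint $(y, x_s, c')$ with $c' \leq \hat c$, and the minimality of $\hat c$ forces $c' = \hat c$, so the endpoint is preserved while the replaced intermediate $(y_k, x_{s_k}, \hat c_k)$ now lies in $X_{Rf}$.

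To iterate this rerouting down to a path wholly inside $X_{Rf}$, I would induct on the lex order $(x_s, \hat c)$ over $X_{Rf}$-triples: every non-$D$ label (from $\Sigma_o \cup I \cup T$) strictly decreases the stage when read backward, and every $D$ label strictly increases the cost going forward whenever the cap $C+1$ has not yet been reached, so the rank of the replacement intermediate is strictly smaller than the previous one. The base case consists of initial triples $(x_0, 0, 0) \in X_{0,f}$, which lie in $X_{Rf}$ trivially. The main obstacle, and the only technical point, is the saturated case $\hat c = C+1$, where backward $D$ edges need not strictly decrease the cost. I would resolve it by tracking, along any minimum-cost $H$-path into a cost-$(C+1)$ $X_{Rf}$-triple, the first transition whose predecessor still has cost strictly below $C+1$; such a transition exists because the path starts at cost $0$. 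That predecessor lies strictly below the cap and so falls under the induction hypothesis, and the portion of the path following it can be kept inside $X_{Rf}$ by iterating the same rerouting step, the cap ensuring that no diversion could lower the final cost below $C+1$ without contradicting the minimality of $\hat c = C+1$.
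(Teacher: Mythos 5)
Your core argument is the same one the paper uses: if a state on a path to the least-cost final triple has been pruned because a cheaper copy $(x_e,x_{sj},c_1')$ with $c_1'<c_1$ exists, replay the unchanged suffix from that cheaper copy --- legal because the $\delta$- and $\delta_s$-components of $\delta_f$ do not depend on the accumulated cost --- and use monotonicity of the capped cost update together with minimality of the final cost to pin down the endpoint. Where you differ is in execution: the paper runs this as a single one-step contradiction (one deleted intermediate yields a strictly cheaper final state, contradicting $E_H$), whereas you upgrade it to an induction on $(x_s,\hat c)$ that produces a path lying \emph{entirely} inside $X_{Rf}$. That extra work is not cosmetic. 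Since $RH=AC(X_{Rf},\dots)$, membership of the final triple in $E_{RH}$ requires accessibility through minimum-cost intermediates only, so the iteration you carry out is genuinely needed to close the argument, and the paper leaves it implicit. You also isolate the one place where the paper's strict inequality $c_1'+c_s<c_1+c_s$ can fail, namely when the cap $C+1$ saturates both sums; your treatment of that case is the least polished part of the sketch (the phrase ``first transition whose predecessor still has cost strictly below $C+1$'' presumably means the \emph{last} such transition, and the termination of the rerouting among saturated triples within one stage deserves an explicit well-founded measure, e.g.\ shortest-path length), but the observation that a diversion cannot drop below $C+1$ without contradicting minimality is the right one. In short: same key lemma, but your version is the more complete proof; tighten the saturated case and it stands on its own.
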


\begin{proof}
	{By contradiction}, suppose that there is a state $(x'_i,c')\in E_{H}(\omega_{A})$ and $(x'_i,c')\notin E_{RH}(\omega_{A})$. Let $(x_e,x_{{s}j}, c_1)\in X_f$, $\omega_R\in RA(\sigma_{Aj}\sigma_{A(j+1)}...\sigma_{Am})$, $j\in\{1,2,...,m\}$, and $\delta_{f}((x_e,x_{{s}j}, c_1),\omega_R)=(x'_i,x_{{s}|\omega_{A}|},c')$. Suppose that $(x_e,x_{{s}j}, c_1)$ is deleted while calculating $X_{RH}$. {This means that there exists}  $(x_e,x_{{s}j}, c'_1)\in X_f$ such that $c'_1<c_1$. Let the total cost of $\omega_R$ be $c_s$. If $\delta_{f}((x_e,x_{{s}j}, c_1), \omega_R)!$, $\delta_{f}((x_e,x_{{s}j}, c'_1),\omega_R)$ must be defined and can lead to the same state of plant $x'_i$. Since $c'_1+c_s<c_1+c_s=c'$, $(x'_i,x_{{s}|\omega_{A}|},c'_1+c_s)\in E_{H}(\omega_{A})$, which {contradicts the definition of $E_{H}$}.  
\end{proof}
We now formulate an algorithm for finding possible states with respect to the least-cost sequence.
\begin{algorithm}[htbp] 
	\caption{ Least cost state estimation} 
	\label{A2} 
	\begin{algorithmic}[1] 
		\REQUIRE 
		An NFA $G_{nd}=(X, \Sigma, \delta, X_0)$ and a possibly tampered  sequence $\omega_{A}=\sigma_{A1}\sigma_{A2}...\sigma_{Am}$, where $\sigma_{Ai}\in \Sigma_{o}$ for $i=1,2,...,m$.
		\ENSURE
		A set of states $E_{RH}(\omega_{A})$ with the least cost.
		\STATE Calculate the set of matching sequences $RA(\omega_{A})$ and obtain the observation automaton $G_{s}$;
		\STATE Calculate the set of matching sequences $RA_{C+1}(\omega_{A})$ with maximum cost $C+1$ and obtain the finite automaton $G_{sc}(C+1)$;
		\STATE Construct $H=AC(G_{nd}||G_{sc}(C+1))=(X_{f}, \Sigma_{o}\cup AT, \delta_{f}, X_{0,f})$;
		\STATE Generate a reduced-state version $RH$;	
		\RETURN $E_{RH}(\omega_{A})$. 
	\end{algorithmic}
\end{algorithm}

The complexity  of constructing a parallel composition $H$ is $O(|X||\omega_{A}|(C+2))$, where $|X|$ is the {size of} state space of $G_{nd}$ and {$|\omega_{A}|$ equals} the length of the possibly tampered sequence.  {Note that} each state has $(C+2)$ cost values.\\

\begin{theorem}
	Given an NFA $G_{nd}$, a possibly tampered sequence $\omega_{A}\in \Sigma_{o}^{\ast}$, and an upper bound on the  total cost $C+1$, a set of states with least costs can be obtained by Algorithm 1.	
\end{theorem}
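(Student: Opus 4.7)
The plan is to establish, in sequence, that each construction in Algorithm 1 is sound with respect to the objects it represents, and then invoke Proposition~\ref{Pro2} to conclude. I would structure the proof as four chained claims, organized so that the combinatorial objects on each side can be matched step by step.

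First, I would show that $G_s$ generates exactly $RA(\omega_A)$ as its language: by induction on the observation stage $i \in \{0,1,\dots,m\}$, the state $x_{i,s}$ is reachable from $x_{0,s}$ by precisely those strings in $(\Sigma_o \cup AT)^\ast$ whose $\hat P$-image matches the first $i$ symbols of $\omega_A$ modulo deletions in $D^\ast$. The structure of the transition function, which loops any $\sigma_R \in D$ at each stage and only advances on a symbol consistent with $\sigma_{A(i+1)} + \sigma_{A(i+1)}'$, makes this straightforward. Next, I would verify that $G_{sc}(C+1)$ augments $G_s$ by annotating each trajectory with its accumulated cost via $\Pi_c$, capped at $C+1$. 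Concretely, for any $\omega_R \in RA(\omega_A)$, the unique trajectory of $G_s$ along $\omega_R$ lifts to a trajectory of $G_{sc}(C+1)$ ending at $(x_{|\omega_A|,s}, \min(\Pi_c(\omega_R), C+1))$, so $G_{sc}(C+1)$'s language contains exactly the matching sequences whose minimum-cost realization does not exceed $C+1$, as required for $RA_{C+1}(\omega_A)$.

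Second, I would argue that the parallel composition $H = AC(G_{nd} \,\|\, G_{sc}(C+1))$ correctly couples each matching sequence with the plant states reached by the corresponding real behavior. Because transitions on $\sigma_R \in \Sigma_o$ synchronize directly with $G_{nd}$, while transitions on $\sigma_R \in AT$ synchronize through $\hat P(\sigma_R)$, a state $(x_i, (x_{|\omega_A|,s}, c)) \in X_f$ is reachable from some $(x_0, x_{0,sc}) \in X_{0,f}$ if and only if there exists $\omega_R \in RA(\omega_A)$ with $\Pi_c(\omega_R) = c$ (capped at $C+1$) and $x_i \in \delta(x_0, \hat P(\omega_R))$ for some $x_0 \in X_0$. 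Hence $E_H(\omega_A)$ is, by construction and by the definition preceding Proposition~\ref{Pro2}, exactly the set of pairs $(x_i,c)$ such that $c$ is the smallest attack cost consistent with reaching the plant state $x_i$ via some matching sequence within the cost bound.

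Third, applying Proposition~\ref{Pro2} I obtain $E_H(\omega_A) \subseteq E_{RH}(\omega_A)$, and the reverse inclusion $E_{RH}(\omega_A) \subseteq E_H(\omega_A)$ is immediate since $X_{Rf} \subseteq X_f$ and the filter defining $X_{Rf}$ preserves precisely the minimum-cost entries at each $(x_i, x_{s,|\omega_A|})$. Combining the two, Algorithm~1 outputs exactly the least-cost state estimates associated with $\omega_A$.

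The main obstacle I anticipate is the second step: making rigorous the correspondence between trajectories of $H$ and pairs $(\omega_R, \text{plant run})$, because the attack labels in $AT$ project through $\hat P$ before synchronizing with $G_{nd}$, which is not the standard parallel composition of \cite{cassandras2009introduction}. I would therefore set up an auxiliary bijection, by induction on $|\omega_R|$, between reachable states of $H$ and triples $(x_0, s, \omega_R)$ with $x_0 \in X_0$, $s \in \Sigma^\ast$, $P(s) = \hat P(\omega_R)$, and $\omega_R$ a prefix of some element of $RA(\omega_A)$, tracking the cost coordinate via $\Pi_c$ and the $\min(\cdot, C+1)$ cap. Once this bookkeeping is in place, the remaining arguments are routine.
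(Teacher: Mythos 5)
Your proposal is correct in substance but takes a genuinely different route from the paper. The paper's proof is a single induction on the length of $\omega_A$: for every prefix $\omega'_A$ it shows that no state $(x', x_{s(k+1)}, c)$ with $c$ strictly below the cost recorded in $E_{RH}(\omega'_A)$ can be reachable, invoking Proposition~\ref{Pro2} at each stage to rule out the cheaper candidate; the semantic correctness of $G_s$, $G_{sc}(C+1)$, and $H$ is taken as already established by the constructions preceding the theorem. You instead decompose the claim into a chain of soundness statements for each automaton built by the algorithm, and only then apply Proposition~\ref{Pro2} together with the (indeed immediate) reverse inclusion $E_{RH}(\omega_A)\subseteq E_H(\omega_A)$. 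Your decomposition is more modular and makes explicit two facts the paper leaves implicit, namely that $\mathcal{L}(G_s)=RA(\omega_A)$ and that reachable states of $H$ correspond to pairs of a matching sequence and a compatible plant run; the price is the extra bookkeeping you acknowledge in your final paragraph. Two small imprecisions to fix: the correspondence between reachable states of $H$ and triples $(x_0,s,\omega_R)$ is a surjection onto the reachable state set, not a bijection, since distinct runs can land on the same state of $H$; and the language of $G_{sc}(C+1)$ is all of $RA(\omega_A)$ with costs \emph{saturated} at $C+1$ (transitions are never blocked by cost), so it does not literally equal the set of sequences of cost at most $C+1$ --- this does not affect the least-cost conclusion, but the statement as you wrote it is not what the definition gives.
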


\begin{proof}
	The proof is conducted by induction on the length of $\omega_{A}$. More specifically, we establish that, for all prefixes $\omega'_A$ of $\omega_A$ (of length $|\omega'_A|=0, 1, 2,..., |\omega_A|$), we have that if $(x_i,c) \in E_{RH}(\omega'_A)$, then no $(x_i,c')$ with $c'<c$ belongs in $E_{RH}(\omega'_A)$.
	
	1) As a base case, we consider $|\omega_{A}|=0$ (i.e., $\omega_{A}=\varepsilon$) which implies that $RA(\varepsilon)=D^{\ast}$. Consider for some state $(x', x_{0,s},c') \in E_{RH}(\varepsilon)$, any state $(x', x_{0,s},c)$ with $c<c'$. According to Proposition~\ref{Pro2}, we have $(x', x_{0,s},c) \notin X_f$, which means that there does not exist $\omega_{R}\in D^*$ such that $x' \in\delta(x_0, \hat{P}(\omega_{R}))$ and $c=\Pi_c(\omega_{R})$. Hence, state $x'$ with cost $c$ is not reachable, which establishes the base case.
		
	2) Assume that the induction hypothesis holds, i.e., for all sequences $\omega'_A$ of length $|\omega'_{A}|=k$, $k\in \mathbb{N}$, the set of states of least costs is captured by $E_{RH}(\omega'_{A})$.
		
	3) We now prove the same for any sequence of length $\omega_A$ of length $k+1$. Clearly, $\omega_A$ can be written as $\omega'_A \sigma_o$ for some prefix $\omega'_A$ of length $k$ and some observable event $\sigma_o \in \Sigma_o$. 
	
	Consider any state $(x', x_{s(k+1)},c') \in E_{RH}(\omega'_{A}\sigma_o)$ and consider a state $(x', x_{s(k+1)},c)$ with $c<c'$. Let $(x'',c'') \in E_{RH}(\omega'_{A})$ {be the state from which state $(x',x_{s(k+1)},c)$ is reached}. According to Proposition~\ref{Pro2}, we have $(x', x_{s(k+1)},c) \notin X_f$ {implying that} there does not exist $\omega_{R}\in RA(\sigma_o)$ such that $x'\in\delta(x'', \hat{P}(\omega_{R}))$ and $c=\Pi_c(\omega_{R})+c''$. Hence, state $x'$ with cost $c$ is not reachable.	
	This completes the proof of the induction step {and the proof of the proposition}.
\end{proof}

\section{Tamper-Tolerant Diagnosability under Cost Constrained Attacks}\label{VerificationFault}
In this section, we propose {an} {approach} to verify {$C$-constrained} {tamper-tolerant} diagnosability (i.e., the property of the system to allow, under any behavior in the system, diagnosis of all faults after a finite number of observations following the occurrence of the fault). This verification can be achieved with complexity that is polynomial in the size of the system and the total cost.

\begin{definition}\label{modifiedPlant}
	
	Consider an NFA $G_{nd}=(X, \Sigma, \delta, X_{0})$ that generates observations that can be tampered via a set $AT=D\cup I\cup T$ of deletions, insertions, and substitutions, under a maximum cost $C$. The modified system, denoted by $G_{mnd}(C+1)$, is a four-tuple NFA: $G_{mnd}(C+1)=(X_{mn}, \Sigma, \delta_{mn}, X_{0,mn})$, where $X_{mn}\subseteq X \times \{0,1,2,...,C+1\}$ is the set of states, {each associated with its respective cost}. 
	{The set of events} $\Sigma$ is $\Sigma=\Sigma_{o}\cup \Sigma_{uo}$ with $\Sigma_{o}$ being the set of observable events and $\Sigma_{uo}$ being the set of unobservable events {with} $\Sigma_{f}\subseteq \Sigma_{uo}$ {capturing} the set of fault events to be diagnosed.
	{The set of initial states} $X_{0,mn}=\{(x, 0)|x\in X_{0}\}\subseteq X_{mn}$ is {associated with zero initial cost}. {The  state transition function} $\delta_{mn}:X_{mn}\times \Sigma \rightarrow 2^{X_{mn}}$ is { defined as follows:} 
	for $(x,c) \in X_{mn}$, $e \in \Sigma\cup\{\varepsilon\}$, $\sigma_{oi}\in \Sigma_o$, $\delta_{mn}((x,c),e)=N_0\cup N_T\cup N_D\cup N_I$ with\\
	1) the zero cost set $N_0=\delta(x,e)\times \{c\}$  if $e \in \Sigma$,\\
	2) the deletion set \\$N_D=\begin{cases}
	\delta(x,\sigma_{oi})\times\{c_{d_{\sigma_{oi}}}+c\}& \text{if}~(d_{\sigma_{oi}}\in D) \wedge \\&(c_{d_{\sigma_{oi}}}+c\leq C+1),\\
	\\
	\emptyset& \text{otherwise},
	\end{cases}$\\
	3) the insertion set\\
	$N_I=\begin{cases}
	\{x\}\times\{c_{i_{e}}+c\}& \text{if}~(i_{e} \in I) \wedge (c_{i_{e}}+c\leq C+1),\\
	\emptyset& \text{otherwise}.
	\end{cases}$\\
	4) the substitution set\\ 
	$N_T=\begin{cases}
	\delta(x,\sigma_{oi})\times\{c_{t_{\sigma_{oi}e}}+c\}& \text{if}~(t_{\sigma_{oi}e}\in T)\wedge \\& (c_{t_{\sigma_{oi}e}}+c\leq C+1), \\
	\\
	\emptyset& \text{otherwise},
	\end{cases}$\\
	The domain of $\delta_{mn}$ can be extended to $X_{mn}\times\Sigma^*$ in the usual way, i.e., for $x_{mn}\in X_{mn}$, $s\in  \Sigma^*$, $\sigma\in  \Sigma$, we have $\delta_{mn}(x_{mn},\sigma s):=\cup_{x'\in \delta_{mn}(x_{mn},\sigma)}\delta_{mn}(x',s)$.
\end{definition}


\begin{example}
	{In} the NFA in Fig.~\ref{case1}, $\Sigma=\{\alpha,\beta,\gamma,\zeta,\sigma_{f}\}$, $\Sigma_{o}=\{\alpha,\beta,\gamma,\zeta\}$, $\Sigma_{uo}=\Sigma_{f}=\{\sigma_{f}\}$, and $X_0=\{0\}$. {Suppose} that $AT=T$, where $T=\{t_{\alpha\beta},t_{\beta\gamma},t_{\zeta\gamma}\}$ and the cost of attacks is as shown in Table~\ref{AttackCostcase1}. We set $C+1=5$. The  modified NFA in Def.~\ref{modifiedPlant} is shown in Fig.~\ref{modifiedcase1}. For example, at state $(1,0)$, the attacker can spend one unit to change event $\alpha$ to $\beta$, which causes the modified system transition to state $(2,1)$. 
	
	\begin{figure}[htb]
		
		\begin{minipage}{1\linewidth}
			\centering
			\includegraphics[scale=0.35]{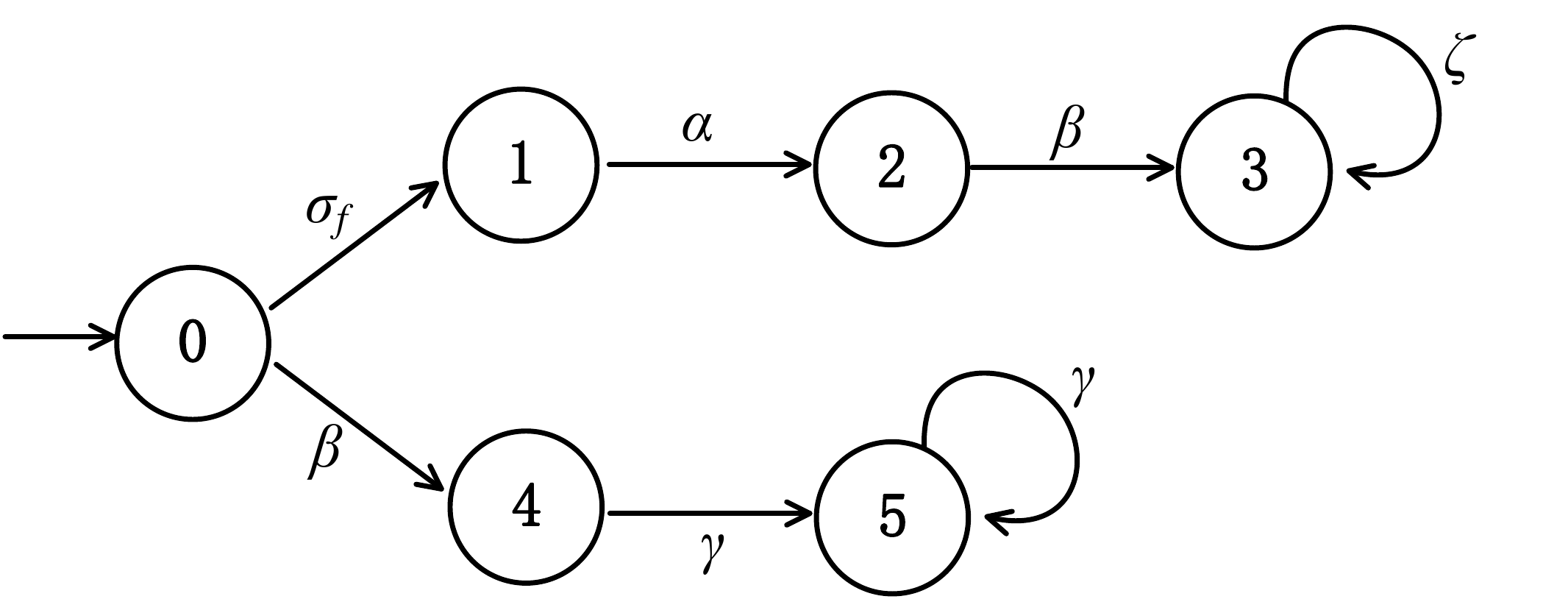}
			
			\caption{Nondeterministic finite automaton with a fault event.}\label{case1}
		\end{minipage}
	\end{figure}
	
	\begin{table}[htb]
		\caption{Attacks with costs for the system in Fig.~\ref{case1}.}\label{AttackCostcase1}
		\begin{tabular} {|c|p{0.61cm}<{\centering}|p{0.61cm}<{\centering}|p{0.61cm}<{\centering}|p{0.61cm}<{\centering}|p{0.61cm}<{\centering}|}
			\hline
			\diagbox{original}{attack}& $\alpha$& $\beta$ & $\gamma$ &$\zeta$& $\varepsilon$ \\
			
			\hline
			$\alpha$& &  1&  &&\\
			\hline
			$\beta$& & & 1&&\\
			\hline
			$\gamma$& & & &&\\
			\hline
			$\zeta$& & & 2 &&\\			
			\hline
			$\varepsilon$& & &&& \\
			\hline				
		\end{tabular}
	\end{table}
	\vspace{10pt}	
	\begin{figure}[htb]
		\begin{minipage}{1\linewidth}
			\centering
			\includegraphics[scale=0.35]{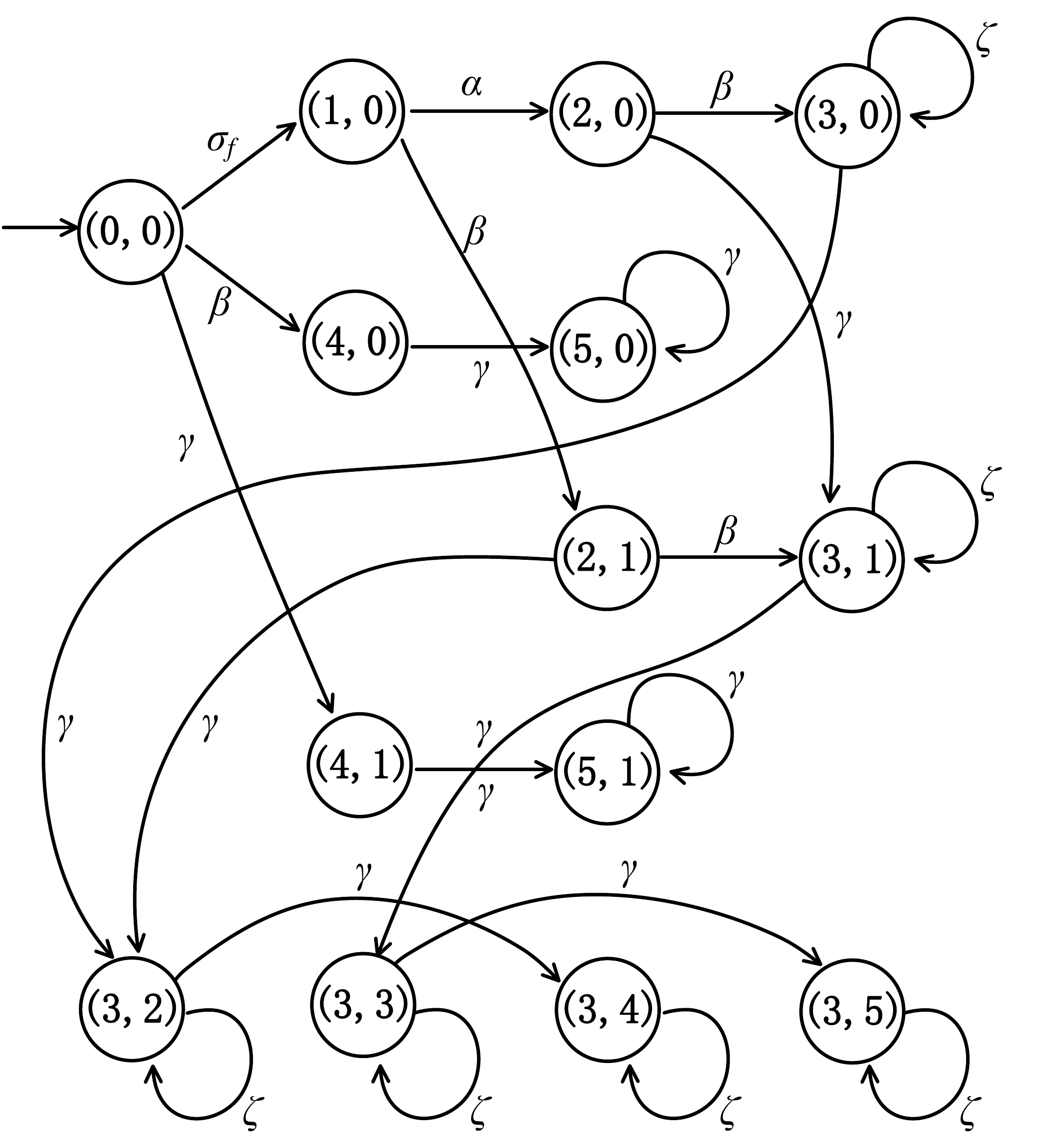}
			
			\caption{Modified NFA for the system in Fig.~\ref{case1}.}\label{modifiedcase1}
		\end{minipage}
		
	\end{figure}	
\end{example}

\begin{example}
	{In} the NFA in Fig.~\ref{case2}, $\Sigma=\{\alpha,\beta,\gamma,\zeta,\sigma_{f}\}$, $\Sigma_{o}=\{\alpha,\beta,\gamma,\zeta\}$, $\Sigma_{uo}=\Sigma_{f}=\{\sigma_{f}\}$, and $X_0=\{0\}$. {Suppose} that $AT=T$, where $T=\{t_{\alpha\gamma},t_{\beta\alpha}\}$. The cost of attacks is shown in Table~\ref{AttackCostFig3}. We set $C+1=3$ and the modified NFA {in} Def.~\ref{modifiedPlant} {is} as shown in Fig.~\ref{modifiedcase2}. For example, the system reaches state $(3,2)$ from state $(1,0)$ if the attacker corrupts $\alpha\beta$ to $\gamma\alpha$.   	
	\begin{figure}[htb]
		
		\begin{minipage}{1\linewidth}
			\centering
			\includegraphics[scale=0.35]{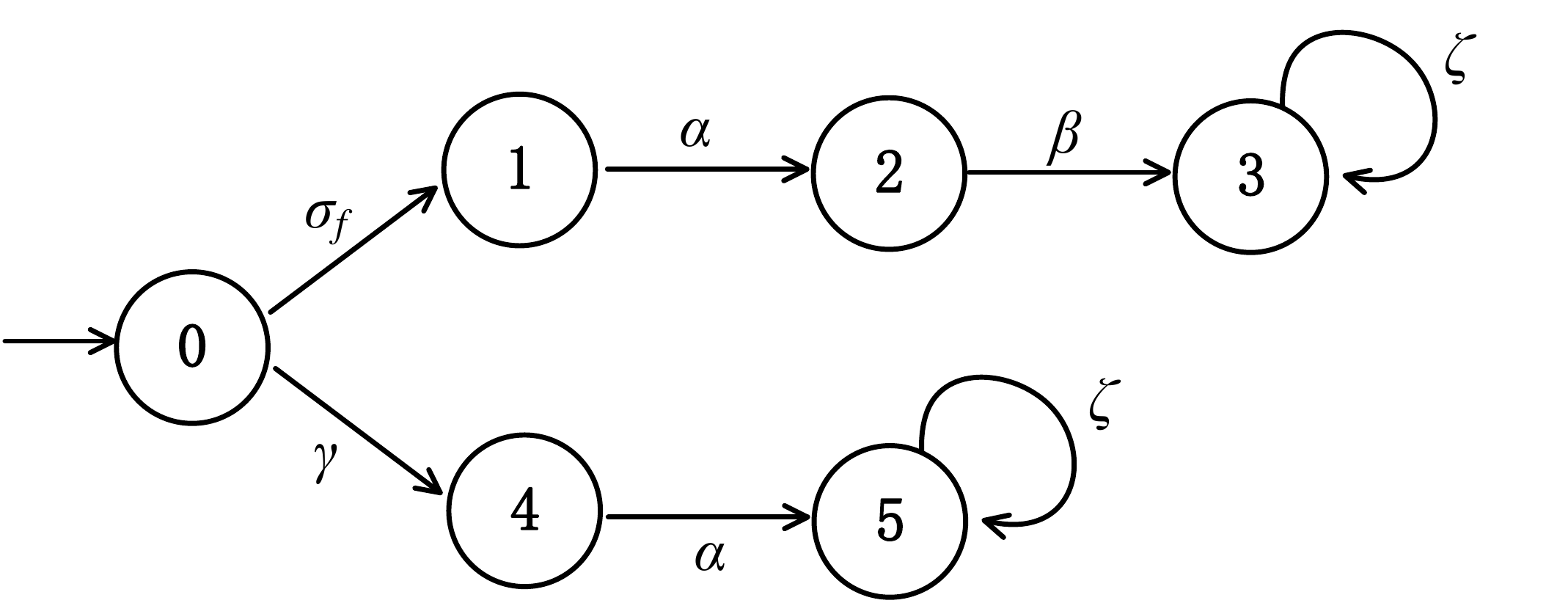}
			
			\caption{Nondeterministic finite automaton with a fault event.}\label{case2}
		\end{minipage}
	\end{figure}

	\begin{table}
		\begin{center}
			\caption{Attacks with costs for the system in Fig.~\ref{case2}.}
			\label{AttackCostFig3}
			\begin{tabular} {|c| p{0.6cm}<{\centering}| p{0.6cm}<{\centering}| p{0.6cm}<{\centering}|p{0.6cm}<{\centering}|p{0.6cm}<{\centering}|}
				\hline
				\diagbox{original}{attack}& $\alpha$& $\beta$ & $\gamma$ &$\zeta$& $\varepsilon$ \\
				
				\hline
				$\alpha$& &  & 1 &&\\
				\hline
				$\beta$&1 & & &&\\
				\hline
				$\gamma$& & & &&\\
				\hline
				$\zeta$& & & &&\\			
				\hline
				$\varepsilon$& & &&& \\
				\hline
				
			\end{tabular}
		\end{center}
	\end{table}
	
	\begin{figure}[htb]
		\begin{minipage}{1\linewidth}
			\centering
			\includegraphics[scale=0.35]{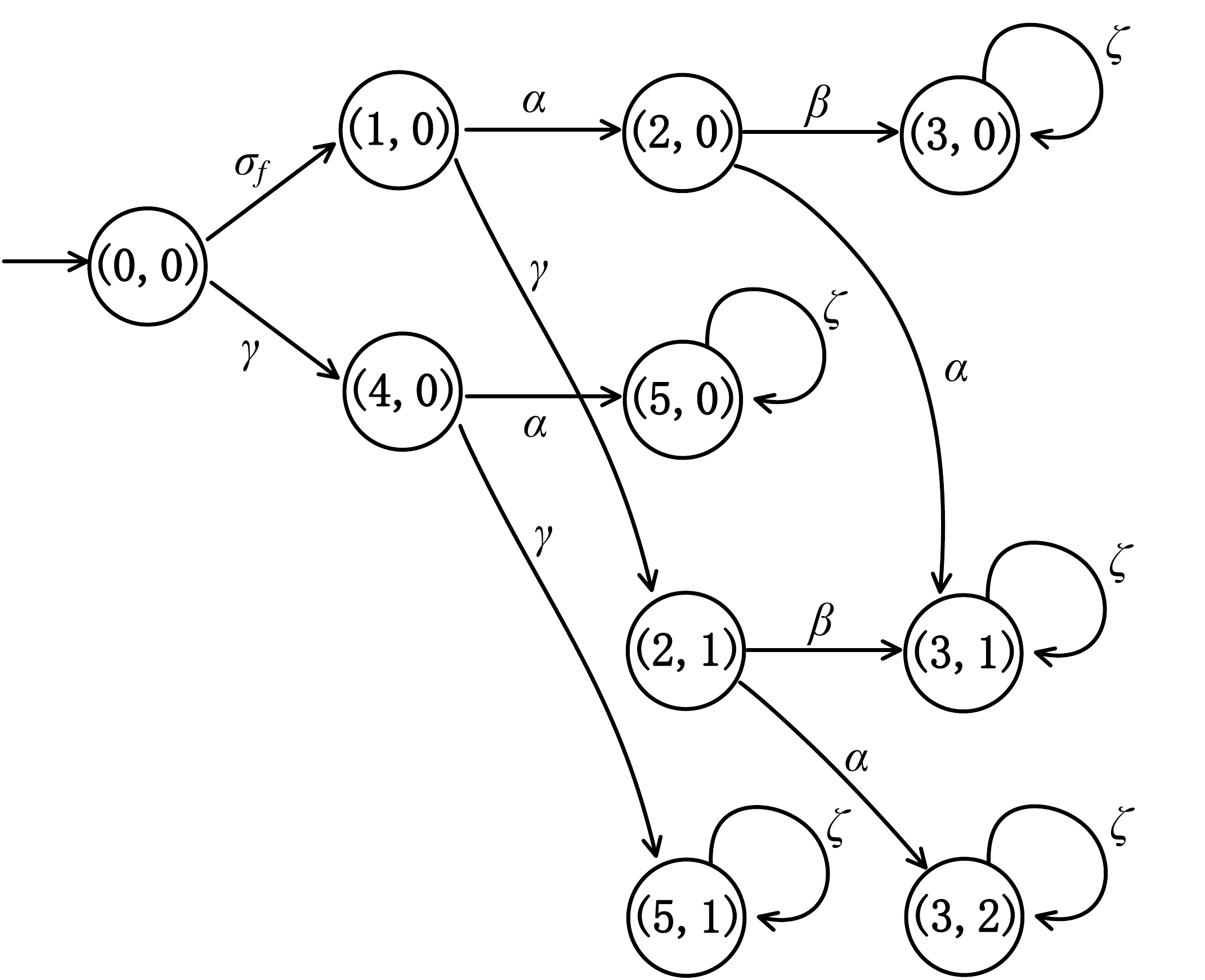}
			
			\caption{Modified NFA for the system in Fig.~\ref{case2}.}\label{modifiedcase2}
		\end{minipage}
		
	\end{figure}

\end{example}

The following assumptions on the language $\mathcal{L}(G_{mnd}(C+1))$ are made when {tamper-tolerant} diagnosability is considered:
(1) We assume as usual the absence in $G_{mnd}$ of cycles of
unobservable events; (2) $\mathcal{L}(G_{mnd}(C+1))$ is live.
\begin{definition}\label{diagnosable}
	An NFA $G_{nd}$ with respect to $\Sigma_{o}$, $\Sigma_{f}$, and $AT$ is said to be $C$\emph{-constrained tamper-tolerant diagnosable} if the following holds: 
	$(\exists n\in \mathbb{N})(\forall s\in (\Sigma\setminus \Sigma_{f})^{\ast}) (\forall\sigma_{f}\in \Sigma_{f}) (\forall s\sigma_{f}\in \mathcal{L}(G_{mnd}(C+1))) (\forall t\in\mathcal{L}(G_{mnd}(C+1))/(s\sigma_{f}))$
	such that $|t|\geq n \Rightarrow \mathcal{D}(s\sigma_{f}t)$, 
	\noindent where the diagnosability function $\mathcal{D}$ is defined as:\\
	\[\mathcal{D}(s\sigma_{f}t)=
	\begin{cases}
	1& \text{if}~[s^{\prime}\in P^{-1}[P(s\sigma_{f}t)]\cap \mathcal{L}(G_{mnd}(C+1))\\&\Rightarrow \sigma_{f}\in s'],\\
	0& \text{otherwise}.\\
	\end{cases}
	\]
\end{definition}

We define the set of possible labels $\Delta =\{N,F\}$, where $N$ {denotes} {normal condition (no failure)}  and $F$ {denotes} that a failure has occurred. {To verify $C$-constrained {tamper-tolerant} diagnosability, we can construct a diagnoser for system $G_{mnd}$ and check for indeterminate cycles as in \cite{sampath1995diagnosability}. Alternatively, we can use a verifier as in \cite{yoo2002polynomial}. Here we follow the latter approach and} construct {{the}} NFA $V_{F}$ for diagnosing the fault events $\Sigma_{f}$ from $G_{mnd}$. We call this automaton the $F$-verifier. {The $F$-verifier is an NFA} $V_{F}=AC(Q^{V_{F}},\Sigma, \delta_{{V}_{F}}, q_{0}^{V_{F}})$,
\noindent where


$Q^{V_{F}}:= X_{mn} \times \Delta \times X_{mn} \times \Delta$

$q_{0}^{V_{F}}:= \cup_{x,y\in X_{0,mn}}\{(x,N,y,N)\}\subseteq Q^{V_{F}}$

For $x_{i}, x_{j}\in X_{mn}$ and $l_{i}, l_{j}\in \Delta$, the (nondeterministic) transition function $\delta_{{V}_{F}}$ is defined as follows.

For $\sigma\in \Sigma_{o}$,
$\delta_{{V}_{F}}((x_{i},l_{i},x_{j},l_{j}),\sigma)=$ $\delta_{mn}(x_{i},\sigma)\times \{l_{i}\} \times\delta_{mn}(x_{j},\sigma)\times \{l_{j}\}.$

For $\sigma\in \Sigma_{uo}\setminus \Sigma_{f}$,
$\delta_{{V}_{F}}((x_{i},l_{i},x_{j},l_{j}),\sigma)=$\\
\[
\begin{cases}
\delta_{mn}(x_{i},\sigma)\times \{l_{i}\}\times \{x_{j}\}\times \{l_{j}\}\\
\{x_{i}\}\times \{l_{i}\} \times\delta_{mn}(x_{j},\sigma)\times \{l_{j}\}\\
\delta_{mn}(x_{i},\sigma)\times \{l_{i}\}\times \delta_{mn}(x_{j},\sigma)\times \{l_{j}\}.
\end{cases}
\]

For $\sigma \in \Sigma_{f}$, $\delta_{{V}_{F}}((x_{i},l_{i},x_{j},l_{j}),\sigma)=$\\
\[
\begin{cases}
\delta_{mn}(x_{i},\sigma)\times \{F\}\times \{x_{j}\} \times \{l_{j}\}\\
\{x_{i}\}\times \{l_{i}\}\times\delta_{mn}(x_{j},\sigma)\times \{F\}\\
\delta_{mn}(x_{i},\sigma)\times \{F\} \times\delta_{mn}(x_{j},\sigma)\times \{F\}.
\end{cases}
\]

Note that for $\sigma\in\Sigma_{o}$, $\delta_{{V}_{F}}((x_{i},l_{i},x_{j},l_{j}),\sigma)$ is {empty} if $\delta_{mn}(x_{i},\sigma)=\emptyset$ or  $\delta_{mn}(x_{j},\sigma)=\emptyset$; for $\sigma\in \Sigma_{uo}\setminus \Sigma_{f}$ or $\sigma \in \Sigma_{f}$, three types of transitions are feasible if $\delta_{mn}(x_{i},\sigma)\neq\emptyset$ and  $\delta_{mn}(x_{j},\sigma)\neq\emptyset$ {whereas only one type of transition is feasible if only one of $\delta_{mn}(x_{i},\sigma)$ or $\delta_{mn}(x_{j},\sigma)$ is non-empty}.
For example, in Fig.~\ref{modifiedcase1}, event $\beta$ is feasible at state $((1,0),F,(0,0),N)$ since $\delta_{mn}((1,0),\beta)$ and $\delta_{mn}((0,0),\beta)$ are both {non-empty}. {In Fig.~\ref{case1}, note that} $\delta_{{V}_{F}}(((1,0),F,(0,0),N),\sigma_f)=\{(1,0)\}\times \{F\}\times\delta_{mn}((0,0),\sigma_f)\times\{N\}$ is also {non-empty} and leads to state $((1,0),F,(1,0),F)$.

A path in {the verifier $V_{F}=AC(Q^{V_{F}},\Sigma, \delta_{{V}_{F}}, q_{0}^{V_{F}})$} is a sequence of states and transitions $\left\langle q_1, \sigma_1, q_2,...,\sigma_{n-1},q_{n}\right\rangle$ such that for each $i\in\{1,2,...,n-1\}$, $q^{V_{F}}_{i+1}\in \delta_{V_{F}}(q^{V_{F}}_{i}, \sigma_{i})$; this path is a cycle if $q_{n}=q_{1}$ and at least one transition is contained along the path.

$V_{F}$ is said to be $F$\emph{-confused} if there is a cycle, $\left\langle q_1, \sigma_1, q_2,...,\sigma_{n-1},q_{n}\right\rangle$, such that for all $q_{i}=(x,l,x',l')$, { $i\in\{1,2,...,n-1\}$, we have} $l=N$ and $l'=F$ or vice versa. If there are no such cycles, we say that $V_{F}$ is $F$\emph{-confusion free}.
\begin{theorem}
	An NFA $G_{nd}$ is $C$-constrained {tamper-tolerant} diagnosable w.r.t. $\Sigma$, $\Sigma_o$, $\Sigma_{f}$, and $AT$ if and only if the corresponding $V_{F}$ is	$F$-confusion free.
\end{theorem}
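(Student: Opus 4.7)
The plan is to prove both directions by contraposition, in each case building an explicit correspondence between pairs of traces in $G_{mnd}(C+1)$ that share the same observable projection and paths in the verifier $V_F$. The key invariant is: for every reachable state $(x_i,l_i,x_j,l_j)\in Q^{V_F}$, there exist $s_i,s_j\in \mathcal{L}(G_{mnd}(C+1))$ with $P(s_i)=P(s_j)$ such that $x_i\in\delta_{mn}(X_{0,mn},s_i)$, $x_j\in\delta_{mn}(X_{0,mn},s_j)$, $l_i=F$ iff $\Sigma_f\in s_i$, and similarly for $l_j$. This invariant follows by a straightforward induction on the length of the path in $V_F$, using that synchronization on $\Sigma_o$ and asynchronous interleaving on $\Sigma_{uo}$ is exactly what $\delta_{V_F}$ encodes.

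\emph{Necessity ($C$-constrained tamper-tolerant diagnosable $\Rightarrow$ $V_F$ is $F$-confusion free).} Suppose, for contradiction, that $V_F$ admits an $F$-confused cycle $\langle q_1,\sigma_1,q_2,\dots,\sigma_{n-1},q_n=q_1\rangle$ with every state of the form $q_k=(x_k,N,x'_k,F)$ (the symmetric case is identical). By the invariant above, there exist a reaching pair $(u,v)$ and a cycle pair $(\tau,\tau')$ so that $u\tau^k$ is fault-free while $v(\tau')^k$ contains $\sigma_f$, and $P(u)=P(v)$, $P(\tau)=P(\tau')$ for every $k\ge 1$. Liveness of $\mathcal{L}(G_{mnd}(C+1))$ and the absence of unobservable cycles guarantee that at least one $\sigma_i$ lies in $\Sigma_o$, so $|P(\tau^k)|\to\infty$ and the faulty trace $v(\tau')^k$ has arbitrarily long continuations $t$ after the fault. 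Yet $u\tau^k$ is a fault-free trace in $P^{-1}(P(v(\tau')^k))\cap\mathcal{L}(G_{mnd}(C+1))$, so $\mathcal{D}(v(\tau')^k)=0$ for every $k$. This contradicts Definition~\ref{diagnosable}.

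\emph{Sufficiency ($V_F$ is $F$-confusion free $\Rightarrow$ $C$-constrained tamper-tolerant diagnosable).} Suppose $G_{nd}$ is not $C$-constrained tamper-tolerant diagnosable. Then for every $n\in\mathbb{N}$ there exist $s\sigma_f\in\mathcal{L}(G_{mnd}(C+1))$ with $s\in(\Sigma\setminus\Sigma_f)^\ast$, $t\in\mathcal{L}(G_{mnd}(C+1))/(s\sigma_f)$ with $|t|\ge n$, and a fault-free witness $s'\in P^{-1}(P(s\sigma_f t))\cap\mathcal{L}(G_{mnd}(C+1))$. By the correspondence invariant, each such $(s\sigma_f t,s')$ lifts to a path in $V_F$ whose terminal state has labels $(F,N)$. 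Since $Q^{V_F}$ is finite but $n$ is unbounded, the pigeonhole principle yields an $n$ large enough that some state $q=(x,F,x',N)$ is visited twice along the lifted path after the $\sigma_f$ transition; the segment between the two visits is a cycle in $V_F$ composed entirely of states with $(F,N)$ labels, i.e., an $F$-confused cycle. This contradicts the hypothesis and completes the proof.

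The main obstacle is the correspondence invariant in the asynchronous unobservable case: we must check that whenever two traces $s_i,s_j$ in $G_{mnd}$ share $P(s_i)=P(s_j)$, the three kinds of unobservable transitions in $\delta_{V_F}$ (firing in the first component only, the second only, or both when $\sigma\in\Sigma_{uo}\setminus\Sigma_f$ happens to coincide) are sufficient to reproduce every interleaving, and conversely that any path of $V_F$ decomposes into a pair of legitimate traces with matching projections. This is a careful but standard bookkeeping argument analogous to the one in \cite{yoo2002polynomial}; no new idea is needed beyond recognizing that the cost annotation on states of $G_{mnd}(C+1)$ plays no special role in the verifier construction, since diagnosability is defined directly over $\mathcal{L}(G_{mnd}(C+1))$.
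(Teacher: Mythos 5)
Your proposal is correct and follows essentially the same route as the paper's proof: necessity by contradiction, turning an $F$-confused cycle into a pair of projection-equivalent traces (one faulty, one not) that can be pumped arbitrarily to violate Definition~\ref{diagnosable}, and sufficiency by contrapositive, lifting the non-diagnosability witnesses to a long path in the finite verifier and applying the pigeonhole principle to extract an $F$-confused cycle. Your explicit statement of the correspondence invariant and the remark that liveness and the absence of unobservable cycles guarantee an observable event in the cycle make precise points the paper leaves implicit, but they do not constitute a different argument.
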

\begin{proof}
	($\Rightarrow$) Assume that $\mathcal{L}(G_{nd})$ is $C$-constrained {tamper-tolerant} diagnosable w.r.t. $\Sigma$, $\Sigma_o$ and $\Sigma_{f}$. By contradiction, suppose that $V_{F}$ {has an} $F$-confused cycle $\left\langle q_1, \sigma_1, q_2,...,\sigma_{n-1},q_{n}\right\rangle$. Let $q_{1}=(x_{i},N,x_{j},F)$. There {exist} $s,s'\in \mathcal{L}(G_{mnd}(C+1))$, and $x,y\in X_{0,mn}$ such that $P(s)=P(s')$, $x_{i}\in\delta_{mn}(x,s)$, $x_{j}\in\delta_{mn}(y,s')$, $\Sigma_f\in s$, and $\Sigma_f\notin s'$. Now, we have $s(\sigma_1\sigma_2...\sigma_{n-1})^{k}, s'(\sigma_1\sigma_2...\sigma_{n-1})^{k} \in \mathcal{L}(G_{mnd}(C+1))$ with the same projection for $k\geq 0$. 
	{It is} obvious that fault events in $s$ are not diagnosable  {since $k$ can be} arbitrarily large. The definition of $C$-constrained {tamper-tolerant} diagnosability is violated.
	
	($\Leftarrow$) By contrapositive, suppose that $\mathcal{L}(G_{nd})$ is not $C$-constrained {tamper-tolerant} diagnosable w.r.t. $\Sigma$, $\Sigma_o$, $\Sigma_{f}$ and $AT$. {This means that for any nonnegative integer $n$,} we can find $s\in (\Sigma\setminus \Sigma_{f})^{\ast}, \sigma_{f}\in \Sigma_{f}$, such that $s\sigma_{f}\in \mathcal{L}(G_{mnd}(C+1)$ and the following is true: $(\exists t\in\mathcal{L}(G_{mnd}(C+1))/(s\sigma_{f}))$   $\{(|t|\geq n)$ and $(\exists s^{\prime}\in P^{-1}[P(s\sigma_{f}t)]\cap \mathcal{L}(G_{mnd}(C+1)))$ such that $\sigma_{f}\notin s'\}$. Let $l\in \overline{s'}$ and $P(l)=P(s\sigma_{f})$. It is obvious that $\Sigma_{f} \notin l$. Let $x_{s\sigma_{f}}\in \delta_{mn}(x,s\sigma_{f})$, $x_l\in \delta_{mn}(y,l)$, $x_{s\sigma_{f}t}\in \delta_{mn}(x_{s\sigma_{f}},t)$, and $x_{s'}\in \delta_{mn}(x_l,\{s'\}/l)$. We {obtain} reachable states $(x_{s\sigma_{f}},F,x_l,N), (x_{s\sigma_{f}t},F,x_{s'},N) \in Q^{V_{F}}$ in $V_{F}$. {Since $n$ can be arbitrarily large, choose} $n'\geq{(2|X|(C+2))}^{2}$. There exists a path, denoted by $\left\langle q_{k1}, \sigma_{k1}, q_{k2},...,\sigma_{k(n'-1)},q_{kn'}\right\rangle$, where $q_{k1}=(x_{s\sigma_{f}},F,x_l,N)$ and $q_{kn'}=(x_{s\sigma_{f}t},F,x_{s'},N)$. Then, it is certain that there exist {$i, j$ satisfying} $1\leq i<j\leq n'$ such that $(q_{ki},F,q_{ki'},N)=(q_{kj},F,q_{kj'},N)$ since $n'\geq {(2|X|(C+2))}^{2}$ {is greater than the maximum possible number of distinct states in the verifier construction}. {Therefore we have identified an} $F$-confused cycle.
\end{proof}
\begin{example}
	We construct {part of} the $F$-verifiers of the modified NFAs  in Figs.~\ref{modifiedcase1} and \ref{modifiedcase2}, as shown in Figs.~\ref{Verifiercase1} and \ref{Verifiercase2}, respectively.  
	The verifier is $F$-confusion free in Fig.~\ref{Verifiercase1}. Hence, $\Sigma_{f}$ is $C$-constrained {tamper-tolerant} diagnosable for the NFA in Fig.~\ref{case1}. {Note that there can be confusion between} $\beta\gamma\gamma$ and $\sigma_{f}\alpha\beta\zeta$ when the attacker corrupts $\alpha\beta\zeta$ to $\beta\gamma\gamma$. However, diagnosis {is possible since eventually} $\zeta$ {will be} observed without corruption due to {the} limitation of the total cost of attacks. {Since} the verifier in Fig.~\ref{Verifiercase2} is $F$-confused, $\Sigma_{f}$ is not $C$-constrained {tamper-tolerant} diagnosable for the NFA in Fig.~\ref{case2}. For the system in Fig.~\ref{case2}, if the attacker successfully corrupts $\alpha\beta$ to $\gamma\alpha$, $\sigma_{f}$ {cannot be diagnosed} {regardless of how long we wait for additional observations.}

	\begin{figure}[htb]
		\begin{minipage}[l]{1\linewidth}
			\centering
			\includegraphics[scale=0.28]{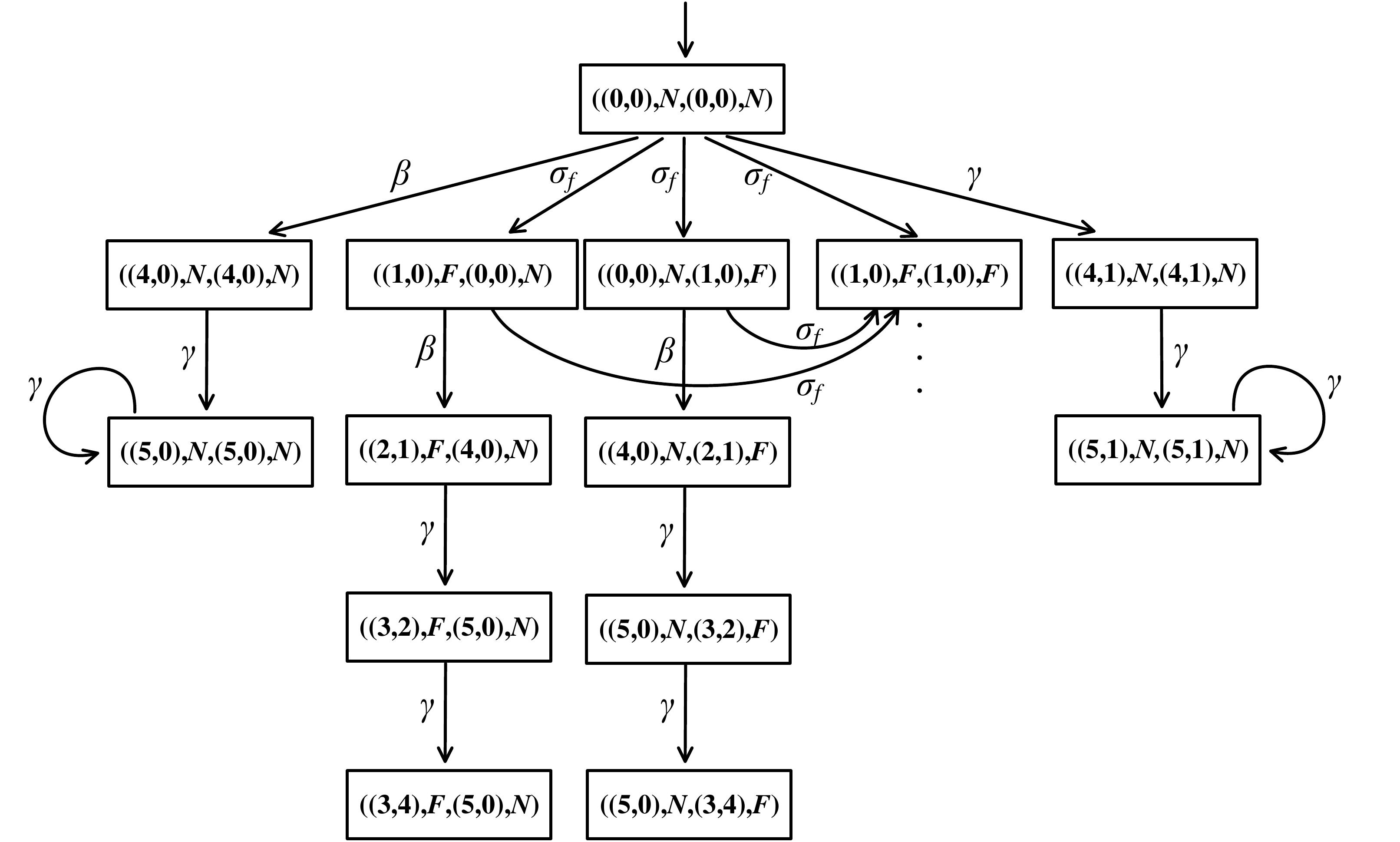}
			
			\caption{Part of the verifier for the modified NFA in Fig.~\ref{modifiedcase1} (continuations not shown cannot lead to $F$-confused cycles).}
			\label{Verifiercase1}
		\end{minipage}
	\end{figure}
	
	\begin{figure}[htb]
		\begin{minipage}[l]{1\linewidth}
			\centering
			\includegraphics[scale=0.4]{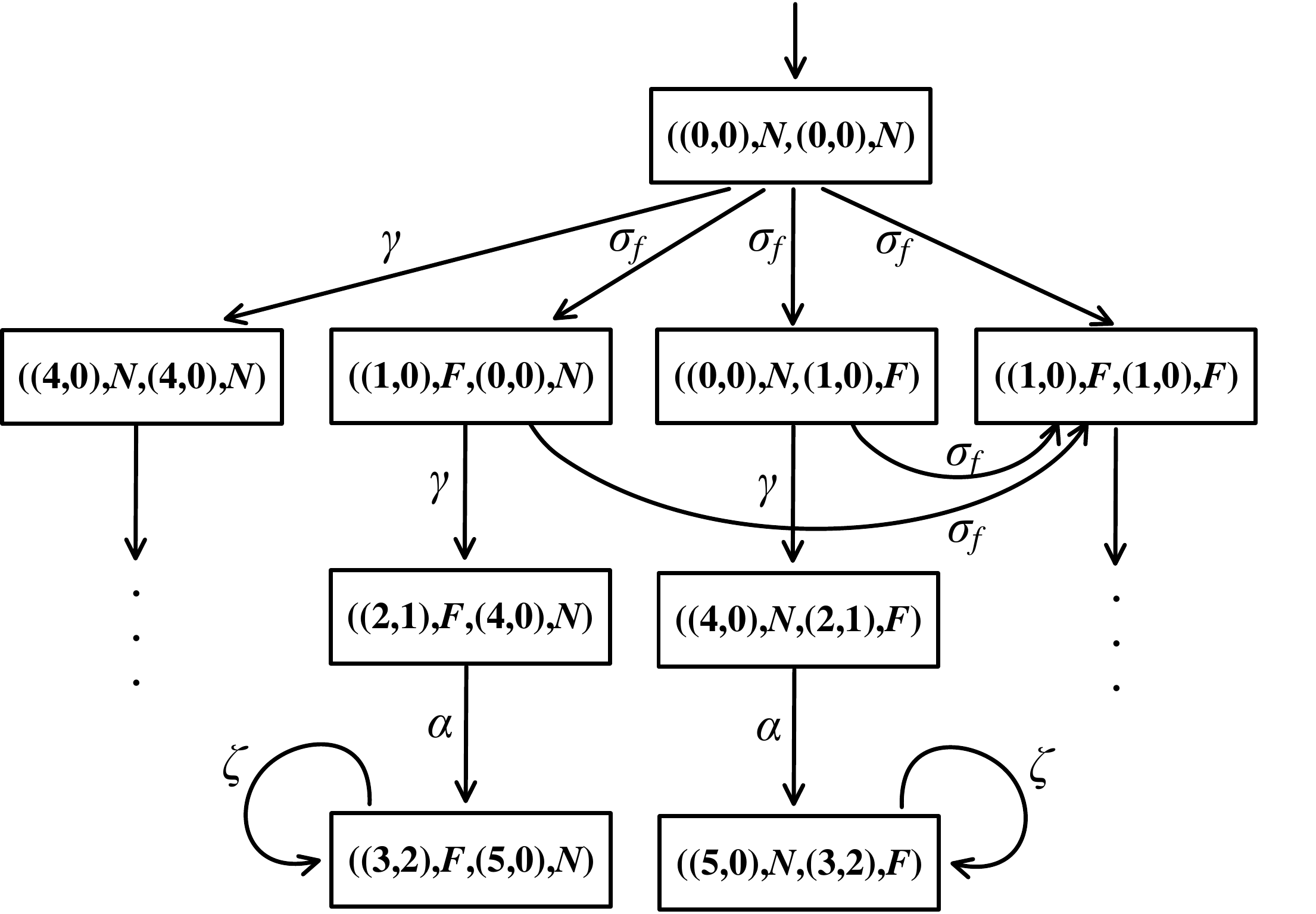}
			
			\caption{Part of the verifier for the modified NFA in Fig.~\ref{modifiedcase2} (continuations not shown cannot lead to $F$-confused cycles).}
			\label{Verifiercase2}
		\end{minipage}
	\end{figure}	
\end{example}

Let {$|X|$ denote} the number of states of $G_{nd}=(X, \Sigma, \delta, X_0)$.
The number of reachable states of $V_{F}$ is {at most} $(2|X|(C+2))^2$. Therefore, the overall complexity {for verifying $C$-constrained {tamper-tolerant} diagnosability using an $F$-verifier} is $O(|X|^2C^2)$.

\section{$C$-Constrained Tampering}\label{Extension}

In this section, we study the case where an attacker, under a constraint of a total cost $C$ on its tampering action, has the capability to cause  a violation of $C'$-constrained tamper-tolerant diagnosability of $G_{nd}$ for arbitrarily large $C'$ ($C'\geq C$). In other words, the attacker can, {at least under some activity in the system,} {coordinate its tampering action to}  keep {the observer indefinitely confused} while utilizing a finite {number} of attacks (more generally, a finite total cost $C$).
Furthermore, we show how one can efficiently calculate the minimum value of $C$ that causes such a violation for at least one fault {within} the behavior of the system.

A useful (and obvious) corollary is presented to explicitly {state} a special case of the existence of $C$.

\begin{corollary}\label{Extencoro}
	If $G_{nd}$ is not diagnosable \cite{sampath1995diagnosability,cassandras2009introduction}, then it is not $C'$-constrained tamper-tolerant diagnosable for any $C'\geq C\geq 0$. 
\end{corollary}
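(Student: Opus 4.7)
The plan is to exploit the fact that the modified plant $G_{mnd}(C'+1)$ strictly extends the original plant $G_{nd}$: any trace of $G_{nd}$ can be replayed in $G_{mnd}(C'+1)$ at zero cost, so standard non-diagnosability of $G_{nd}$ lifts immediately to a failure of $C'$-constrained tamper-tolerant diagnosability, with no attacks needed at all.

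First I would record the embedding claim: $\mathcal{L}(G_{nd})\subseteq \mathcal{L}(G_{mnd}(C'+1))$ for every $C'\geq 0$. This follows directly from Definition~\ref{modifiedPlant}, because the zero-cost set $N_0=\delta(x,e)\times\{c\}$ contributes every original transition of $G_{nd}$ to $\delta_{mn}$ while keeping the cost coordinate at $c$. Hence, starting from the initial state $(x_0,0)$ with $x_0\in X_0$, any string $u\in\mathcal{L}(G_{nd})$ can be executed in $G_{mnd}(C'+1)$ using only $N_0$-transitions, ending at a state whose cost coordinate is still $0\leq C'+1$. The projection $P$ is defined on $\Sigma$ in the same way in both settings, so $P(u)$ in $G_{mnd}(C'+1)$ coincides with $P(u)$ in $G_{nd}$.

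Next I would invoke the assumed non-diagnosability of $G_{nd}$ in the classical Sampath sense: for every $n\in\mathbb{N}$ there exist $s\in(\Sigma\setminus\Sigma_f)^*$, $\sigma_f\in\Sigma_f$, $t\in\mathcal{L}(G_{nd})/(s\sigma_f)$ with $|t|\geq n$, and $s'\in P^{-1}[P(s\sigma_f t)]\cap\mathcal{L}(G_{nd})$ with $\sigma_f\notin s'$. By the embedding of the previous paragraph, both $s\sigma_f t$ and $s'$ lie in $\mathcal{L}(G_{mnd}(C'+1))$, and moreover $P(s\sigma_f t)=P(s')$ continues to hold in the modified system. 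Thus, for every $n$, the pair $(s\sigma_f t, s')$ witnesses that $\mathcal{D}(s\sigma_f t)=0$ in $G_{mnd}(C'+1)$, and no bound $n$ can satisfy the universal statement in Definition~\ref{diagnosable}. Consequently $G_{nd}$ fails to be $C'$-constrained tamper-tolerant diagnosable. Since this argument holds for every $C'\geq C\geq 0$, the corollary follows.

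There is no real obstacle here: the entire content is the subsumption $\mathcal{L}(G_{nd})\subseteq\mathcal{L}(G_{mnd}(C'+1))$ via zero-cost transitions. The only point worth being careful about is that the confusing pair supplied by classical non-diagnosability requires no tampering at all, so it automatically respects any cost budget $C'\geq 0$; hence increasing $C'$ can only make the situation worse, never better.
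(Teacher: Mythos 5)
Your proof is correct. The paper states this corollary without any proof (calling it ``obvious''); the inclusion $\mathcal{L}(G_{nd})\subseteq\mathcal{L}(G_{mnd}(C'+1))$ obtained from the zero-cost set $N_0$ in Definition~\ref{modifiedPlant}, together with the observation that a classical non-diagnosability witness for $G_{nd}$ therefore remains a witness against $\mathcal{D}(s\sigma_f t)=1$ in $G_{mnd}(C'+1)$ for every $n$, is exactly the reasoning the paper implicitly relies on, and your write-up fills that gap cleanly.
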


In the case that system $G_{nd}$  {{\em is}} diagnosable, we need to confirm whether the attacker can corrupt the output of the system such that a particular fault does not get diagnosed and remains non-diagnosable indefinitely with a finite number of attacks.

\begin{definition}\label{modifiedPlant2}
	
	Given an NFA $G_{nd}=(X, \Sigma, \delta, X_{0})$, the corrupted system, denoted by $G_{cn}$, is an NFA $G_{cn}=(X, (\Sigma\cup\{\varepsilon\})\times \mathbb{N}, \delta_{cn}, X_{0})$, where $(\Sigma\cup\{\varepsilon\})\times \mathbb{N}$ is the set of pairs involving an event and its corresponding cost.
	{The  state transition function} $\delta_{cn}:X\times ((\Sigma\cup\{\varepsilon\})\times \mathbb{N}) \rightarrow 2^{X}$ is defined as follows: 
	for $x \in X$, $(e,c) \in (\Sigma\cup\{\varepsilon\})\times \mathbb{N}$,
	$\sigma_{oi}\in \Sigma_o$,
	
	$\delta_{cn}(x,(e,c))=\begin{cases}
	\delta(x,e)& \text{if}~c=0,\\	
	\\
	\delta(x,\sigma_{oi})& \text{if}~(e=\varepsilon)\wedge(\sigma_{oi}\in \Sigma_D)\wedge\\& (c=\Pi_{c}(d_{\sigma_{oi}})>0),\\
	\\		
	x& \text{if}~(e\in \Sigma_I)\wedge (c=\Pi_{c}(i_e)>0),\\
	\\
	\delta(x,\sigma_{oi})& \text{if}~((\sigma_{oi},e)\in \Sigma_T) \wedge\\& (c=\Pi_{c}(t_{\sigma_{oi},e})>0).\\		
	\end{cases}$\\

	The domain of $\delta_{cn}$ can be extended to $X\times(\Sigma\cup\{\varepsilon\}\times \mathbb{N})^\ast$ in the usual way, i.e., for $x\in X$, $s_c\in ((\Sigma\cup\{\varepsilon\})\times \mathbb{N})^*$, $(e,c)\in (\Sigma\cup\{\varepsilon\})\times \mathbb{N}$, we have $\delta_{cn}(x,(e,c)s_c):=\cup_{x'\in \delta_{cn}(x,(e,c))}\delta_{cn}(x',s_c)$.
\end{definition}



\begin{example}
	{Considering again} the NFA $G_{nd}$ in Fig.~\ref{case2}, all  transitions defined in $G_{nd}$ are set with zero cost in $G_{cn}$ as shown in Fig.~\ref{FigCorNFA2}. The pairs involving events and positive costs are also partially defined in some of the states according to Table~\ref{AttackCostFig3}, such as $(\gamma,1)$ at state 1.	
	
	\begin{figure}[htbp]
		\begin{minipage}[l]{1\linewidth}
			\centering
			\includegraphics[scale=0.33]{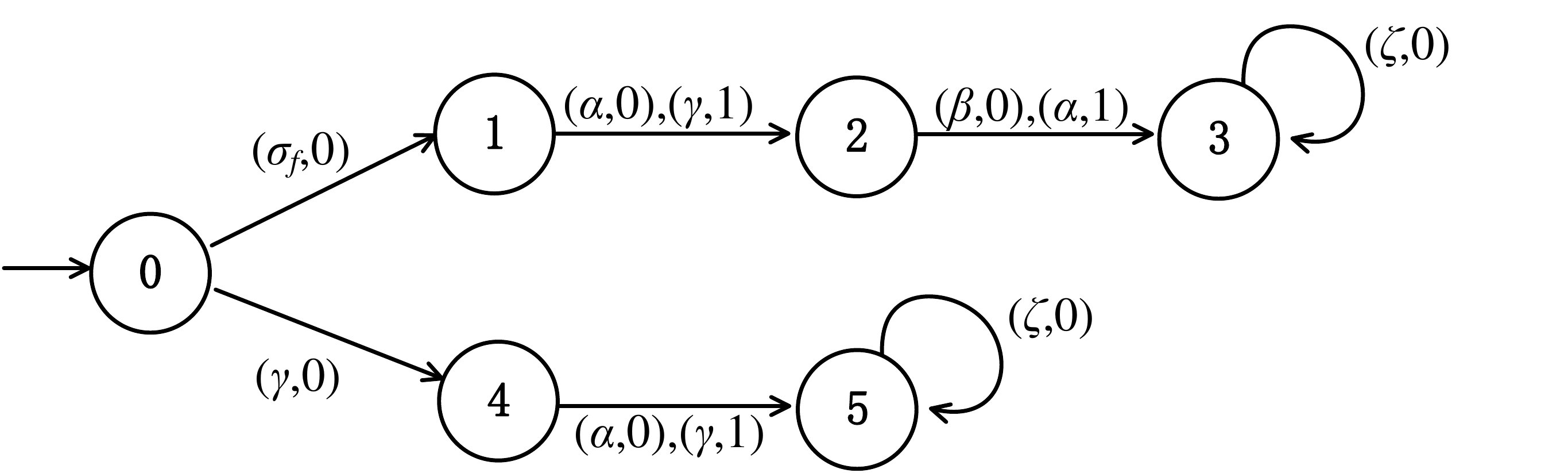}
			\caption{Corrupted automaton $G_{cn}$ for NFA in Fig.~\ref{case2}.}\rm
			\label{FigCorNFA2}
		\end{minipage}
	\end{figure}
\end{example}

The verifier for the corrupted system, denoted by $V'_F$, is called a modified verifier. The automaton $V'_F$ is defined similarly {to} $V_F$, i.e.,  $V'_{F}=AC(Q^{V'_{F}},((\Sigma\cup\{\varepsilon\})\times \mathbb{N})\times((\Sigma\cup\{\varepsilon\})\times \mathbb{N}), \delta_{V'_{F}}, q_{0}^{V'_{F}})$,
\noindent where

$Q^{V'_{F}}:= X \times \Delta\times X \times \Delta$

$q_{0}^{V'_{F}}:= \cup_{x,y\in X_{0}}\{(x,N,y,N)\}\subseteq Q^{V'_{F}}$

For $x_{i}, x_{j}\in X$, $l_{i}, l_{j}\in \Delta$, and $c,c' \in \mathbb{N}$, the (nondeterministic) transition function $\delta_{V'_{F}}$ is defined as follows.

For $\tau=((e,c),(e,c'))\in ((\Sigma_{o}\cup\{\varepsilon\})\times \mathbb{N})\times ((\Sigma_{o}\cup\{\varepsilon\})\times \mathbb{N})$, we define 
$\delta_{V'_{F}}((x_{i},l_{i},x_{j},l_{j}),\tau)=$
\[\delta_{cn}(x_{i},(e,c))\times \{l_{i}\} \times\delta_{cn}(x_{j},(e,c'))\times \{l_{j}\}.\]
For $\tau=((e,0),(e,0))\in ((\Sigma_{uo}\setminus \Sigma_{f})\times \mathbb{N})\times ((\Sigma_{uo}\setminus \Sigma_{f})\times \mathbb{N})$,
$\delta_{V'_{F}}((x_{i},l_{i},x_{j},l_{j}),\tau)=$
\[\begin{cases}
\delta_{cn}(x_{i},(e,0))\times \{l_{i}\}\times \{x_{j}\}\times \{l_{j}\}\\
\{x_{i}\}\times \{l_{i}\} \times\delta_{cn}(x_{j},(e,0))\times \{l_{j}\}\\
\delta_{cn}(x_{i},(e,0))\times \{l_{i}\}\times \delta_{cn}(x_{j},(e,0))\times \{l_{j}\}.
\end{cases}\]
For $\tau=((e,0),(e,0))\in (\Sigma_{f}\times \mathbb{N})\times (\Sigma_{f}\times \mathbb{N})$,
$\delta_{V'_{F}}((x_{i},l_{i},x_{j},l_{j}),\tau)=$
\[\begin{cases}
\delta_{cn}(x_{i},(e,0))\times \{F\}\times \{x_{j}\} \times \{l_{j}\}\\
\{x_{i}\}\times \{l_{i}\}\times\delta_{cn}(x_{j},(e,0))\times \{F\}\\
\delta_{cn}(x_{i},(e,0))\times \{F\} \times\delta_{cn}(x_{j},(e,0))\times \{F\}.
\end{cases}\]

A path in $V'_{F}$ is a sequence of {states and} transitions $\eta$: $\left\langle q^{V'_{F}}_1, \tau_1, q^{V'_{F}}_2,...,\tau_{n-1},q^{V'_{F}}_{n}\right\rangle$ such that for each $i\in\{1,2,...,n-1\}$, $q^{V'_{F}}_{i+1}\in \delta_{V'_{F}}(q^{V'_{F}}_{i}, \tau_{i})$; this path is a cycle if $q^{V'_{F}}_{n}=q^{V'_{F}}_{1}$ and at least one transition is contained along the path. 

{The} modified verifier $V'_{F}$ is said to be \emph{modified} $F$\emph{-confused} if there is a cycle $\xi$: $\left\langle q^{V'_{F}}_1, \tau_1, q^{V'_{F}}_2,...,\tau_{n-1},q^{V'_{F}}_{n}\right\rangle$ such that for all $q^{V'_{F}}_{i}=(x,l,x',l')\in Q^{V'_{F}}$, $\tau_{i}=((e,c),(e,c'))$, $i\in\{1,2,...,n-1\}$, we have  $l=N$ and $l'=F$ or vice versa, and $c=c'=0$. We call this cycle a modified $F$-confused cycle. If there are no such cycles, we say that $V'_{F}$ is \emph{modified} $F$\emph{-confusion free}.
We use $q^{V'_{F}}_{i}\in \xi$ to represent $q^{V'_{F}}_{i}$ belonging to $\xi$, where $\xi\in V'_{F}$ represents a cycle $\xi$ in $V'_{F}$. 

Suppose that there exist $m$ modified $F$-confused cycles, denoted by $\xi_1,\xi_2,...,\xi_m$.
We refer to ending states as the set of states in $V'_{F}$ {that are members of at least one of these modified F-confused cycles}; this set is defined as $X_e=\cup_{z\in\{1,2,...,m\}} X_{ez}$, where  $X_{ez}=\{(x_{i},l_{i},x_{j},l_{j})\in Q^{V'_{F}}|\exists\xi_z\in V'_{F},~\text{such that}~ (x_{i},l_{i},x_{j},l_{j})\in\xi_z\}$.
For $V'_F$ in Fig.~\ref{FigVerNFA2}, the set of ending states $X_e=\{(3,F,5,N),(5,N,3,F)\}$.

A path $\eta$: $\left\langle q^{V'_{F}}_1, \tau_1, q^{V'_{F}}_2,...,\tau_{n-1},q^{V'_{F}}_{n}\right\rangle$ has two total costs, called left and right total costs, respectively. The left total cost, denoted by $CP_l(\eta)$, is defined as $CP_l(\eta)=\Sigma_{i\in \{1,2,...,n-1\}}CP_l(\tau_{i})$, where $\tau_{i}=((e,c),(e,c'))$ and $CP_l(\tau_{i})=c$.
The right total cost, denoted by $CP_r(\eta)$, is defined as $CP_r(\eta)=\Sigma_{i\in \{1,2,...,n-1\}}CP_r(\tau_{i})$, where $\tau_{i}=((e,c),(e,c'))$ and $CP_r(\tau_{i})=c'$.
The total cost of $\eta$, denoted by $CP(\eta)$, is defined as $CP(\eta)=\max(CP_l(\eta),CP_r(\eta))$.
Note that we select the maximum value of left and right total costs since if the upper bound on the total cost from $q^{V'_{F}}_1$ is set to the maximum one, {then the attacker has enough costs to generate the sequence of observations that corresponds to this path, starting from either of two different sequences of actual observations that match the (left and right) costs in the path.}

\begin{corollary}\label{extcoroexistcycle}
	There exists a finite positive integer $C$ such that $G_{nd}$ is not $C'$-constrained tamper-tolerant diagnosable  for arbitrarily large $C'$ ($C'\geq C$) if and only if {the} modified verifier $V'_F$ processes a cycle that is modified $F$-confused.
\end{corollary}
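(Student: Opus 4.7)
The plan is to follow the same verifier-based framework that was used in the proof of the $F$-verifier theorem, but with the key modification that we now track costs along paths and cycles in $V'_F$. The central observation to exploit is that the attacker has only a finite budget, so any infinite tampering strategy that perpetually confuses the observer must eventually enter a loop whose per-iteration cost is zero, and such a loop is exactly what the definition of a modified $F$-confused cycle captures.

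For the ($\Leftarrow$) direction, I would assume $V'_F$ contains a modified $F$-confused cycle $\xi$ that is reachable from $q_0^{V'_F}$ via a finite prefix path $\eta_0$, and set $C := CP(\eta_0)$. Since every transition along $\xi$ has both left and right cost equal to zero, the traces read off from the left and right components of the concatenation $\eta_0 \xi^k$ correspond, respectively, to a fault-containing trace $s\sigma_f t_k$ and a fault-free trace $s'_k$ that both belong to $\mathcal{L}(G_{mnd}(C'+1))$ for every $C' \geq C$ and every $k \in \mathbb{N}$. They share the same observable projection, and $|t_k|$ grows unboundedly in $k$, which violates the definition of $C'$-constrained tamper-tolerant diagnosability regardless of the horizon $n$. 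Hence $G_{nd}$ fails to be $C'$-constrained tamper-tolerant diagnosable for any $C' \geq C$.

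For the ($\Rightarrow$) direction, I would proceed by contrapositive. Suppose $V'_F$ is modified $F$-confusion free and yet some finite $C$ witnesses a violation of $C'$-constrained tamper-tolerant diagnosability for arbitrarily large $C'$. Then for any prescribed horizon $n$ there exist traces $s\sigma_f t \in \mathcal{L}(G_{mnd}(C+1))$ and $s' \in P^{-1}[P(s\sigma_f t)] \cap \mathcal{L}(G_{mnd}(C+1))$ with $\Sigma_f \notin s'$ and $|t| \geq n$. Synchronizing the observable actions of $s\sigma_f t$ and $s'$ through the pair-state transitions of $V'_F$, using the cost-annotated transitions of $G_{cn}$, yields a path $\eta$ whose every state carries an $(N,F)$ or $(F,N)$ label and whose total cost $CP(\eta)$ is at most $C$. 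Since $V'_F$ has at most $(2|X|)^2$ states, choosing $n$ large enough forces $\eta$ to traverse some cycle $\xi$ arbitrarily many times. The main obstacle will be to argue that, because $CP(\eta) = \max(CP_l(\eta),CP_r(\eta))$ stays bounded by $C$, every transition inside $\xi$ must carry zero cost on both the left and the right component; otherwise iterating $\xi$ would drive the accumulated path cost above $C$, contradicting the budget bound. Once this pumping step is secured, $\xi$ satisfies all the requirements of a modified $F$-confused cycle, contradicting the assumption and completing the proof.
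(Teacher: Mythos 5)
Your ($\Leftarrow$) direction is essentially the paper's: take the finite prefix path to the modified $F$-confused cycle, let $C$ be its total cost, and pump the zero-cost cycle to defeat any horizon $n$ for every $C'\geq C$. Your ($\Rightarrow$) direction, however, takes a genuinely different route. The paper does not pump directly on $V'_F$: it first invokes Theorem 2 to conclude that the $F$-verifier of $G_{mnd}(C+1)$ contains an $F$-confused cycle, and then transfers that cycle to $V'_F$; the reason the transferred cycle is \emph{modified} $F$-confused (i.e., all transition costs are $(0,0)$) is that in $G_{mnd}(C+1)$ the accumulated cost is part of the state and is non-decreasing, so any cycle in that verifier must leave the cost unchanged and hence consist only of zero-cost transitions. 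Your approach instead synchronizes the two witnessing traces directly in $V'_F$ and argues from the budget bound. That is self-contained and avoids routing through Theorem 2, but the one step you flag as the ``main obstacle'' is stated in a way that does not quite work: a single path $\eta$ traverses a given cycle segment $\xi$ only as many times as it actually does, so you cannot argue that ``iterating $\xi$ would drive the cost above $C$'' --- the path is fixed and may cross a positive-cost cycle exactly once without violating the budget. The correct repair is exactly the paper's implicit device: enlarge the pigeonhole state to the triple (verifier state, accumulated left cost, accumulated right cost). Since both accumulated costs are non-decreasing and bounded by $C+1$, a post-fault segment of length exceeding $(2|X|)^2(C+2)^2$ must repeat such a triple, and the segment between the two occurrences is a cycle with zero left and right cost whose states all carry labels $(F,N)$ or $(N,F)$ (note also that only the post-fault portion of $\eta$ has these labels, not ``every state'' as you wrote). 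With that repair your argument closes and is, if anything, more explicit than the paper's.
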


\begin{proof}
	($\Rightarrow$)  Suppose that $G_{nd}$ is not $C$-constrained tamper-tolerant diagnosable, where $C\geq 0$. It is certain that {the} $F$-verifier of $G_{mnd}(C+1)$ contains $F$-confused cycles, where the maximum cost associated with each state is $C+1$, which means that, at most, $C+1$ units of costs are required by {the} attacker to corrupt the output of the system such that a particular fault does not get diagnosed and remains non-diagnosable indefinitely.
	{It follows} that the modified verifier $V'_F$ can process a cycle that is modified $F$-confused and {can be reached by having the attacker invest at most} $C+1$ units of costs.
	
	($\Leftarrow$) If {the} modified verifier $V'_F$ processes a modified $F$-confused cycle, then there must exist a path with {a finite number of} transitions from {a pair of} initial states to a particular ending state in {this modified $F$-confused} cycle in $V'_F$. The total cost of the path is finite due to the finite number of transitions. {Suppose that} the total cost of {this} path {is} $C$. We conclude that there exists a sequence of events $s$ followed by a sequence of events $t^n$, such that the attacker can (i) spend the total cost of at most $C$ to generate a sequence of observations that could be matched to the two sequences that correspond to the path that leads to the modified $F$-confused cycle; (ii) spend a total cost zero to cycle through the modified $F$-confused cycle $n$ times (once for each execution of $t$). Therefore, an attacker can make $G_{nd}$ non $C'$-constrained tamper-tolerant diagnosable, {for any $C' \geq C$}, by generating $F$-confused cycles in the $F$-verifier of $G_{mnd}(C'+1)$.
\end{proof}

\begin{proposition}
	Let the maximum individual cost of each attack be $c_{max}$. The modified verifier $V'_F$ can be constructed in $O(c^2_{max}|X|^2|\Sigma_o|)$.
\end{proposition}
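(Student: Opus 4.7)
The plan is to decompose the construction cost of $V'_F$ into (i) a bound on the number of reachable states and (ii) a bound on the out-degree at each state, and then multiply the two. Because $V'_F$ is defined as the accessible part of a product construction over $G_{cn}$, a breadth-first exploration from $q_0^{V'_F}$ visits each state once and, at that state, tries each candidate transition, so this product is indeed an upper bound on the running time.

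First I would count states: by definition $Q^{V'_F} \subseteq X \times \Delta \times X \times \Delta$ with $|\Delta|=2$, giving $|Q^{V'_F}| \leq 4|X|^2 = O(|X|^2)$. Next I would bound the out-degree. Inspecting Definition of $G_{cn}$, every edge of $G_{cn}$ encodes either an ordinary transition of $G_{nd}$ (cost $0$) or exactly one deletion, insertion, or substitution, so the cost carried on a single edge never exceeds $c_{max}$. Consequently the labels $(e,c)$ used in $G_{cn}$ lie in $(\Sigma\cup\{\varepsilon\})\times\{0,1,\dots,c_{max}\}$. A transition of $V'_F$ is labeled $\tau = ((e,c),(e,c'))$ sharing the same event $e$ on both sides. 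For observable $e$ this gives at most $(|\Sigma_o|+1)(c_{max}+1)^2$ candidate labels; for unobservable $e$ the rules force $c=c'=0$, adding only $O(|\Sigma_{uo}|)$ more labels, which is absorbed into the $O(|\Sigma_o|)$ term. Checking whether such a candidate is actually defined is an $O(1)$ lookup in $\delta_{cn}$, which is itself a constant-factor expansion of $\delta$.

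Multiplying, the accessible part of $V'_F$ is built in time $|Q^{V'_F}| \cdot O(c_{max}^2|\Sigma_o|) = O(c_{max}^2|X|^2|\Sigma_o|)$, as claimed. The only nontrivial point in this argument is the tight per-edge cost bound: one must observe that each transition of $G_{cn}$ (hence each side of a $V'_F$ transition) reflects at most a single attack action, so its cost is capped by $c_{max}$ rather than by any global budget $C$ that may appear elsewhere in the paper. Once this observation is in place the remainder is routine bookkeeping over the definitions of $Q^{V'_F}$ and $\delta_{V'_F}$.
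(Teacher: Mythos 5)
Your proposal is correct and follows essentially the same argument as the paper: bound the reachable states by $4|X|^2$, bound the out-degree per state by $(c_{max}+1)^2(|\Sigma_o|+1)$ plus an $O(|\Sigma_{uo}|)$ contribution from unobservable events, and multiply. The paper's proof is the same routine counting argument (and it, too, absorbs the unobservable-event term into the final $O(c_{max}^2|X|^2|\Sigma_o|)$ bound), so no substantive differences to report.
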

\begin{proof}
	The number of reachable states in $V'_F$ is {at most} $4|X|^2$. For each reachable state, there are $(c_{max}+1)^2(|\Sigma_o|+1)+3|\Sigma_{uo}|$ feasible transitions: (i) For $\sigma \in \Sigma_o \cup \{\varepsilon\}$, there are at most $(c_{max}+1)^2$ kinds of pairs involving event $\sigma$ and positive costs; (ii) For $\sigma \in \Sigma_{uo}$, three $((\sigma,0),(\sigma,0))$ can be defined at a reachable state. The construction of $V'_F$ takes $4|X|^2((c_{max}+1)^2(|\Sigma_o|+1)+3|\Sigma_{uo}|)$ {operations with} overall complexity {of} $O(c^2_{max}|X|^2|\Sigma_o|)$.
\end{proof}

For simplicity, we omit the algorithm of identifying all modified $F$-confused cycles. They can be calculated with polynomial complexity using a depth-first search (DFS). More specifically, we mark each state in $Q^{V'_{F}}$ that is visited; if a state is visited for the second time, then one has a cycle (which can be obtained by tracing back the DFS tree).

\begin{example}
	In Fig.~\ref{FigVerNFA2}, $((\sigma_f,0),(\sigma_f,0))$ leads the modified verifier to states $(1,F,0,N)$, $(0,N,1,F)$, and $(1,F,1,F)$ from the initial sate $(0,N,0,N)$. At $(1,F,0,N)$, $\delta_{cn}(1,(\gamma,1))=\{2\}$ and $\delta_{cn}(0,(\gamma,0))=\{4\}$. Hence, $\delta_{V'_{F}}((1,F,0,N),((\gamma,1),(\gamma,0)))=\{(2,F,4,N)\}$. 
	Since $V'_{F}$ is modified $F$-confused, there exists $C$ such that $G_{nd}$ is not $C'$-constrained tamper-tolerant diagnosable. For simplicity, in the diagrams, we omit the self loops with $((\varepsilon,0),(\varepsilon,0))$ at each state.
	\begin{figure}[htbp]
		\begin{minipage}[l]{1\linewidth}
			\centering
			\includegraphics[scale=0.36]{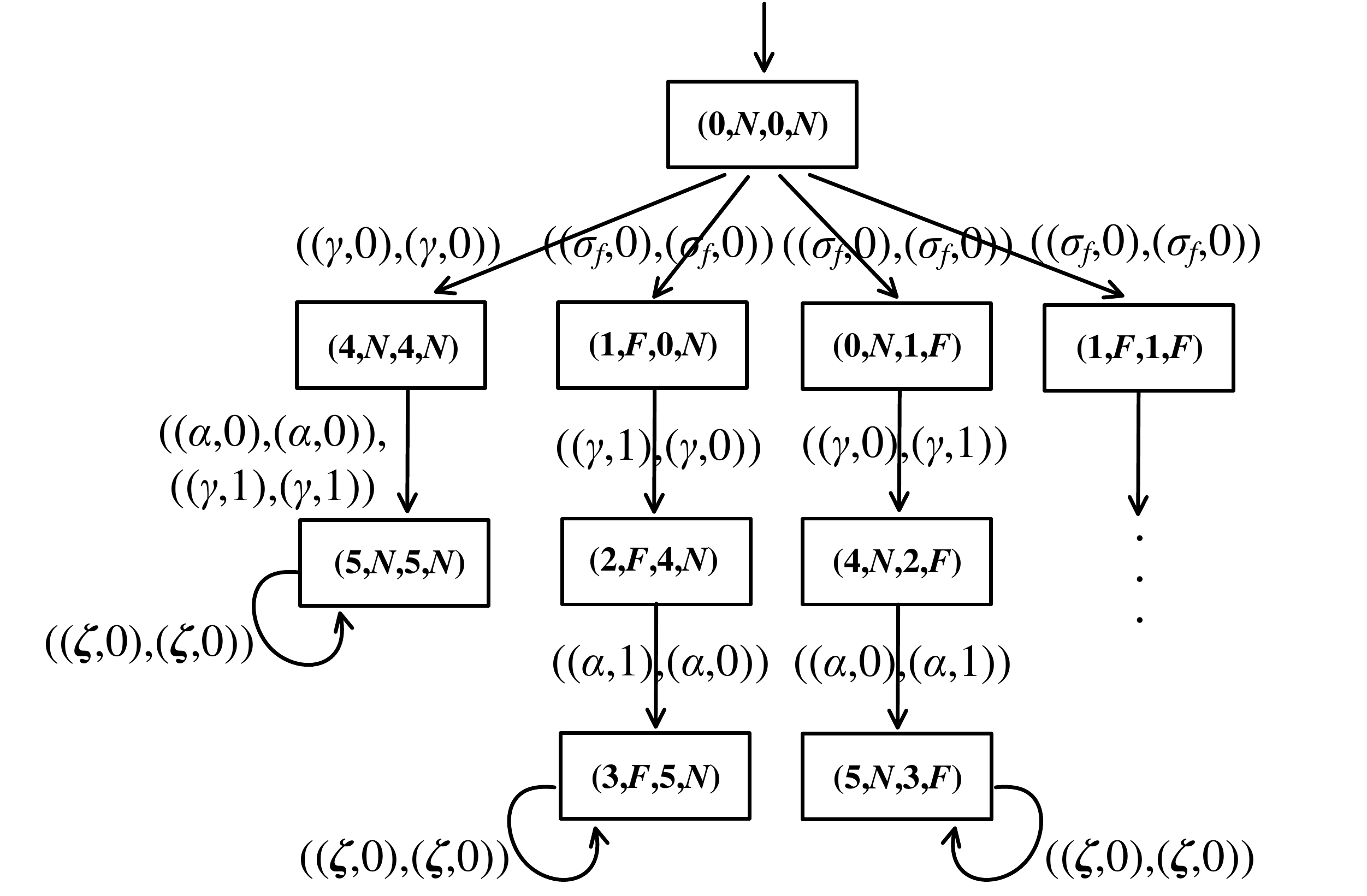}
			\caption{Modified verifier for corrupted system in Fig.~\ref{FigCorNFA2} (continuations not shown cannot lead to modified $F$-confused cycles).}\rm
			\label{FigVerNFA2}
		\end{minipage}
	\end{figure}
\end{example}

\begin{definition}
	The set of paths from an initial state to an ending state in a  modified $F$-confused cycle $\xi_z$, denoted by $Y(\xi_z)$, is defined as $Y(\xi_z)=\{\eta:\left\langle q^{V'_{F}}_1, \tau_1, q^{V'_{F}}_2,...,\tau_{n-1},q^{V'_{F}}_{n}\right\rangle|q^{V'_{F}}_1\in q_{0}^{V'_{F}},q^{V'_{F}}_{n}\in X_{ez}\}$.
\end{definition}

\begin{definition}	
	A path $\eta\in Y(\xi_z)$ is said to be  minimum cost with respect to a modified $F$-confused cycle $\xi_z$ if there does not exist $\eta'\in Y(\xi_z)$ such that $CP(\eta')<CP(\eta)$.	
\end{definition}

For $V'_F$ in Fig.~\ref{FigVerNFA2}, $\langle(0,N,0,N),$ $((\sigma_f,0),(\sigma_f,0)),$ $(1,F,0,N),$ $((\gamma,1),$ $(\gamma,0)),$ $(2,F,4,N),$ $((\alpha,1),$ $(\alpha,0)),$ $(3,F,5,N)\rangle$ and $\langle(0,N,0,N),$ $((\sigma_f,0),$ $(\sigma_f,0)),$ $(0,N,1,F),$ $((\gamma,0),$ $(\gamma,1)),$ $(4,N,2,F),$ $((\alpha,0),$ $(\alpha,1)),$ $(5,N,3,F)\rangle$ are two minimum cost paths.

The minimum value of $C$, denoted by $C_{min}$, can be calculated by $C_{min}=\{CP(\eta)|(\forall z_1,z_2\in\{1,2,...,m\})$ $(\eta\in Y(\xi_{z_1}))(\nexists\eta'\in Y(\xi_{z_2})) \{CP(\eta')<CP(\eta)\}\}$.

The procedure for finding  the minimum value $C_{min}$ is outlined in Algorithm 1, which {proceeds} in three steps. First, each initial state $(x,N,y,N)\in q_0^{V'_{F}}$ gets a cost to be a pair of the form $(0,0)$; all other states in $Q^{V'_{F}}\backslash q_0^{V'_{F}}$ get a cost of the form $(\infty,\infty)$. 
Then, we run the following iteration for at most $4|X|^2\times(4|X|^2\times c_{max}+1)$ states in $Q_c$, where $4|X|^2$ is the maximum number of states of the modified verifier and $4|X|^2\times c_{max}+1$ is the maximum number of pairs of costs. For each state $(x, l_1, y, l_2)\in Q^{V'_{F}}$, there are at most $(|\Sigma|+1)(c_{max}+1)^2$ feasible transitions and $4|X|^2$ next states.
For a state $(x', l'_1, y', l'_2)$, the pair of costs is supposed to capture the minimal costs required to reach the pair of states $(x', l'_1)$ and $(y', l'_2)$. 
For each state $(x', l'_1, y', l'_2)$, we consider the following conditions: a) State $(x',l'_1,y',l'_2,c'_1,c'_2)$ is added to $Q^{V'_{MF}}$ if $(c'_1,c'_2)$ is minimal, i.e., we keep  $(c'_1,c'_2)$ if there exists a state $(x',l'_1,y',l'_2,c_1,c_2)$ such that ($c'_1\leq c_1$ and $c'_2< c_2$) or ($c'_1< c_1$ and $c'_2\leq c_2$), and eliminate $(x',l'_1,y',l'_2,c_1,c_2)$; b) If  $(c'_1,c'_2)$ is not minimal, we also keep  $(c'_1,c'_2)$ if it is incomparable, i.e., we keep  $(x',l'_1,y',l'_2,c'_1,c'_2)$ if there exists a state $(x',l'_1,y',l'_2,c_1,c_2)$ such that ($c'_1 > c_1$ and $c'_2 < c_2$) or ($c'_1 < c_1$ and $c'_2 > c_2$). Finally, we select the minimum cost of maximum value of left and right costs for all ending states with complexity $4|X|^2\times(4|X|^2\times c_{max}+1)$.
Overall, the total cost of finding $C_{min}$ would be $O(4|X|^2\times(4|X|^2\times c_{max}+1)\times (|\Sigma|+1)(c_{max}+1)^2 \times 4|X|^2\times (4|X|^2\times c_{max}+1)+4|X|^2\times(4|X|^2\times c_{max}+1))=O((4|X|^2)^4\times c^4_{max}\times |\Sigma|)$.

\begin{algorithm}[htbp] 
	\caption{Identification of minimum value $C_{min}$} 
	\label{A3} 
	\begin{algorithmic}[1] 
		\REQUIRE 
		A modified verifier $V'_{F}=AC(Q^{V'_{F}},((\Sigma\cup\{\varepsilon\})\times \mathbb{N})\times((\Sigma\cup\{\varepsilon\})\times \mathbb{N}), \delta_{V'_{F}}, q_{0}^{V'_{F}})$ and the set of ending states $X_e$.
		\ENSURE
		The minimum value $C_{min}$.
		\STATE $q_{0}^{V'_{MF}}:=\{(x,N,y,N,0,0)|(x,N,y,N)\in q_{0}^{V'_{F}}\}$;
		
		\STATE $Q':=\{(x,l_1,y,l_2,\infty,\infty)|(x,l_1,y,l_2)\in Q^{V'_{F}}\backslash q_{0}^{V'_{F}}\}$;
		
		\STATE $Q^{V'_{MF}}:=q_{0}^{V'_{MF}} \cup Q'$;
		$Q_c:=q_{0}^{V'_{MF}}$;
		$C_{min}:=\infty$;	
		
		\WHILE {$Q_c\neq \emptyset$}
		
		\FOR {each state $(x,l_1,y,l_2,c_x,c_y)\in Q_c$}
		
		\STATE $Q_c=Q_c \backslash \{(x,l_1,y,l_2,c_x,c_y)\}$;
		
		\FOR {each $\tau=((e,c),(e,c'))$ and\\ $\delta_{V'_{F}}((x,l_1,y,l_2),$ $\tau)$  $\neq$ $\emptyset $}\label{judgeEmpty}
		
		\FOR {each $(x',l'_1,y',l'_2)\in \delta_{V'_{F}}((x,l_1,y,l_2),$ $\tau)$}\label{newState}
		
		\STATE	$c'_1:= c_x+c$;	$c'_2 := c_y+c'$;
		
		UpdateCost$((x',l'_1,y',l'_2),(c'_1,c'_2))$;
		
		\ENDFOR 
		\ENDFOR
		\ENDFOR        
		\ENDWHILE
		
		\FOR {$(x,l_1,y,l_2)\in X_e$}
		\FOR {$(x,l_1,y,l_2,c_1,c_2)\in Q^{V'_{MF}}$}
		\IF{$\max(c_1,c_2)<C_{min}$}
		\STATE	$C_{min}=\max(c_1,c_2)$;
		\ENDIF
		\ENDFOR
		\ENDFOR
		\RETURN  $C_{min}$. 
		
		~~~~~~~~~	
		\STATE \textbf{procedure} {UpdateCost$((x',l'_1,y',l'_2),(c'_1,c'_2))$} 	
		
		\STATE $Q_d:=Q^{V'_{MF}}$;
		\FOR {each state $(x',l'_1,y',l'_2,c_1,c_2)\in Q_d$}		
		\IF {($c'_1 \leq c_1$ and $c'_2 < c_2$) or ($c'_1 < c_1$ and $c'_2 \leq c_2$)}
		
		\STATE	$Q^{V'_{MF}}=Q^{V'_{MF}}\cup \{(x',l'_1,y',l'_2,c'_1,c'_2)\}$ $\backslash$ $ \{(x'_i,l'_i,x'_j,l'_j,c_1,c_2)\}$;
		
		\STATE	$Q_c=Q_c\cup \{(x',l'_1,y',l'_2,c'_1,c'_2)\}$;
		
		\ELSE \IF {($c'_1 < c_1$ and $c'_2 > c_2$) or ($c'_1 > c_1$ and $c'_2 < c_2$)}
		
		\STATE $Q^{V'_{MF}}=Q^{V'_{MF}} \cup  \{(x',l'_1,y',l'_2,c'_1,c'_2)\}$;
		
		\STATE	$Q_c=Q_c\cup \{(x',l'_1,y',l'_2,c'_1,c'_2)\}$;
		
		\ENDIF
		\ENDIF
		\ENDFOR	
		\STATE \textbf{end procedure}
	\end{algorithmic}
\end{algorithm}

\begin{corollary}\label{extcordiag1}
	An attacker, under a constraint of a total cost $C_{min}$ on its tampering action,  has the capability to cause  a violation of $C'$-constrained tamper-tolerant diagnosability of $G_{nd}$ for arbitrarily large $C'$ ($C'\geq C_{min}$).
\end{corollary}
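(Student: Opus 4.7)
The plan is to combine Corollary \ref{extcoroexistcycle} with the minimum-cost path characterization built into Algorithm 2. In the trivial case where $V'_F$ contains no modified $F$-confused cycle, Corollary \ref{extcoroexistcycle} already rules out the existence of any finite-budget attack (and $C_{min}$ is vacuously $\infty$), so the substantive content lies in the case where some modified $F$-confused cycle $\xi_{z^\ast}$ exists. My proof would proceed by extracting, from a minimum-cost path into such a cycle, two concrete system traces that the attacker can render indistinguishable using at most $C_{min}$ units of cost, and then exploiting the zero-cost nature of the cycle itself to extend the confusion indefinitely.

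First I would fix a modified $F$-confused cycle $\xi_{z^\ast}$ together with an ending state $q^{\ast}\in X_{ez^\ast}$ and a path $\eta^\ast\in Y(\xi_{z^\ast})$ attaining the minimum in the definition of $C_{min}$, i.e.\ with $CP(\eta^\ast)=\max(CP_l(\eta^\ast),CP_r(\eta^\ast))=C_{min}$. Writing $\eta^\ast$ as $\langle q_1^{V'_F},\tau_1,\ldots,\tau_{n-1},q_n^{V'_F}\rangle$ with $q_1^{V'_F}=(x_0,N,y_0,N)$ and $q_n^{V'_F}=q^\ast$, each transition label has the form $\tau_i=((e_i,c_i),(e_i,c_i'))$. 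The left projection $(e_1,c_1)\ldots(e_{n-1},c_{n-1})$ and the right projection $(e_1,c_1')\ldots(e_{n-1},c_{n-1}')$, together with the definition of $\delta_{cn}$ in $G_{cn}$, unambiguously identify a pair of system strings $s_1,s_2\in\mathcal{L}(G_{nd})$ such that (by the labeling rules for $V'_F$) exactly one of $s_1,s_2$ contains an event of $\Sigma_f$.

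The key step is to verify that both $s_1$ and $s_2$ project, under suitable tampering, to a common observation sequence $\omega$. By construction of $\delta_{cn}$, each transition $(e_i,c_i)$ with $c_i>0$ corresponds to a single deletion, insertion, or substitution whose cost is exactly $c_i$; summing along the left (resp.\ right) side yields that the attacker can tamper $P(s_1)$ to $\omega$ at cost $CP_l(\eta^\ast)$ and $P(s_2)$ to $\omega$ at cost $CP_r(\eta^\ast)$. Since $CP(\eta^\ast)=\max(CP_l(\eta^\ast),CP_r(\eta^\ast))=C_{min}$, a single attacker equipped with a budget of $C_{min}$ suffices to realize $\omega$ starting from either underlying trace; this is exactly the reason Algorithm 2 takes the maximum of the two column costs. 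Because the cycle $\xi_{z^\ast}$ consists, by the definition of modified $F$-confusion, entirely of transitions $((e,0),(e,0))$, a corresponding common event sequence $t\in\mathcal{L}(G_{nd})/s_1\cap\mathcal{L}(G_{nd})/s_2$ can be executed on both sides at zero additional cost. Thus for every $n\geq 0$ and every $C'\geq C_{min}$, the pair $s_1 t^n, s_2 t^n$ lies in $\mathcal{L}(G_{mnd}(C'+1))$ with identical observable projection, falsifying Definition \ref{diagnosable} regardless of how large $n$ is.

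The main obstacle I expect is Step two above, namely rigorously translating a path in $V'_F$ into a pair of genuine plant runs plus a tampering strategy whose cost matches the path cost on each side. This requires a careful induction on the length of $\eta^\ast$, invoking the four cases in the definition of $\delta_{cn}$ (zero-cost, deletion, insertion, substitution) and the three branches in the transition rule of $V'_F$ for unobservable and fault events. Once that correspondence is laid out, the cycle-pumping argument and the monotonicity in $C'$ are immediate, and the result follows directly from Corollary \ref{extcoroexistcycle} applied with $C=C_{min}$.
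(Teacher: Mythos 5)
Your proposal is correct and follows essentially the same route as the paper: both take a minimum-cost path $\eta$ with $CP(\eta)=C_{min}$ into an ending state of a modified $F$-confused cycle, observe that the fault transition $((\sigma_f,0),(\sigma_f,0))$ places the $F$ label on exactly one component, and use the all-zero-cost cycle to pump the confusion indefinitely for any $C'\geq C_{min}$. Your write-up is in fact more explicit than the paper's about extracting the two plant traces from the left and right projections and about why taking the maximum of the two column costs is the right budget, but the underlying argument is the same.
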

\begin{proof} 
	Suppose that the path that corresponds to $C_{min}$ is $\eta$: $\left\langle q^{V'_{F}}_1, \tau_1, q^{V'_{F}}_2,...,\tau_{n-1},q^{V'_{F}}_{n}\right\rangle$,
	where  $q^{V'_{F}}_1=(x_0,N,x'_0,N)\in q_{0}^{V'_{F}}$, $q^{V'_{F}}_{n}\in X_{ez}$, $q^{V'_{F}}_{i}=(x,l,x',l')$, $\tau_{i}=((e,c),(e,c'))$, $l=N$, $l'=N \vee F$, $i\in\{1,2,...,n-1\}$, and $CP(\eta)=C_{min}$. There exists $\sigma_f\in \Sigma_{f}$ such that $\tau_{j}=((\sigma_f,0),(\sigma_f,0))$ and  $j\in\{1,2,...,n-1\}$. 
	At state $q^{V'_{F}}_{j}=(x_j,l_j,x'_j,l'_j)$, $l_j=l'_j=N$. At state $q^{V'_{F}}_{j+1}=(x_{j+1},l_{j+1},x'_{j+1},l'_{j+1})$, $l_{j+1}=N$ and $l'_{j+1}=F$.
	There is a modified $F$-confused cycle defined after state $q^{V'_{F}}_{n}$, denoted by $\xi':\left\langle q^{V'_{F}}_{n}, \tau'_0, q^{V'_{F}}_{n+1},...,q^{V'_{F}}_{n+w}, \tau'_{w},q^{V'_{F}}_{n}\right\rangle$, where 
	for all $q^{V'_{F}}_{n+y}=(x,l,x',l')\in Q^{V'_{F}}$, $\tau'_{y}=((e',0),(e',0))$, $y
	\in\{0,1,2,...,w\}$, $l=N$, and $l'=F$. 
	
	If $C'\geq C_{min}$, the modified verifier $V'_F$ can construct modified $F$-confused cycles, i.e., at least the event $\sigma_f$ constructed above will be non-diagnosable  with a finite number of attacks (more generally, a total cost $C_{min}$).  
\end{proof}

\section{Conclusions}\label{Conclusion}

In this paper, we consider current-state estimation in a DES modeled as an NFA, under insertions, deletions, and substitutions {of observed symbols}. An observation automaton model is used to represent all possibly matching sequences {of observations}, which avoids explicitly enumerating all {such} sequences. {An algorithm is proposed} that is able to {systematically} {perform} this task. In order to ensure the property of {tamper-tolerant} diagnosability, a modified system is constructed, where attacks and costs are attached to the original plant. {Then, we verify the disgnosability of the plant under attacks through a verifier {with complexity that is polynomial in the size of the plant and the maximum value of the costs}.} A modified corrupted system and modified verifier are proposed to find the minimum value of $C$ that causes a violation of {tamper-tolerant} diagnosability for at least one fault.

In the future, we plan to develop ways to efficiently assess whether it is preferable to {perform} state estimation under multiple sensor measuring units.
We also plan to consider how state estimation {can be {achieved}} in the presence of {other} types of attackers. The cost of attacks is artificially assigned with respect to the likelihood of attack happening, which may {be challenging}. Hence, we plan to find an adaptive cost assignment function to dynamically adjust the likelihood of attacks.

\ifCLASSOPTIONcaptionsoff
  \newpage
\fi

\bibliographystyle{IEEEtran}
\bibliography{References}

\end{document}